\pgfplotsset{compat=1.18}
\newcommand{\Height}{0.7cm}
\newcommand{\Width}{1.4cm}
\tikzset{Square/.style={
    inner sep=0pt,
    text width=\Width,
    minimum size=\Height,
    draw=black,
    fill=red!0,
    align=center
    }
}
\begin{document}
\pdfoutput=1
\def\btheta{\boldsymbol{\theta}}
\def\trans{^{\scriptscriptstyle \sf T}}
\def\bzero{\boldsymbol{0}}
\def\by{\mathbf{y}}
\def\bgamma{\boldsymbol{\gamma}}
\def\S{\boldsymbol{S}}
\def\bx{\mathbf{x}}
\def\m{\boldsymbol{m}}
\def\bphi{\boldsymbol{\phi}}
\def\c{\boldsymbol{c}}
\def\C{\boldsymbol{C}}
\def\Var{\mathbf{Var}}
\def\bu{\boldsymbol{u}}
\def\bs{\mathbf{s}}
\def\bl{\mathbf{l}}
\def\G{\boldsymbol{G}}
\def\e{\boldsymbol{e}}
\def\bw{\mathbf{w}}
\def\bz{\mathbf{z}}
\def\bY{\mathbf{Y}}
\def\bbeta{\boldsymbol{\beta}}
\def\bfeta{\boldsymbol{\eta}}
\def\bz{\mathbf{z}}
\def\bpsi{\boldsymbol\psi}
\def\balpha{\boldsymbol{\alpha}}
\def\romone{\uppercase\expandafter{\romannumeral1}}
\def\romtwo{\uppercase\expandafter{\romannumeral2}}
\def\romthree{\uppercase\expandafter{\romannumeral3}}
\def\bX{\mathbf{X}}
\def\bY{\mathbf{Y}}
\def\bxi{\boldsymbol{\xi}}

\newtheorem{assumption}{ASSUMPTION}
\newtheorem{lemma}{Lemma}
\newtheorem{corollary}{Corollary}
\newtheorem{theorem}{Theorem}
\newtheorem{cond}{Condition}
\newtheorem{remark}{Remark}

\title{Robust and Efficient Semi-supervised Learning \\ for Ising Model}
\author{Daiqing Wu$^{1}$, Molei Liu$^{2}$ \bigskip \\
\small 
$^1${Department of Mathematics, Sun Yat-sen University} \\
\small 
$^2${Department of Biostatistics, Columbia University Mailman School of Public Health} \\
}

\date{}
\thispagestyle{empty}

\maketitle
\begin{abstract}
\vspace{0.1in}
\noindent
    In biomedical studies, it is often desirable to characterize the interactive mode of multiple disease outcomes beyond their marginal risk. Ising model is one of the most popular choices serving for this purpose. Nevertheless, learning efficiency of Ising models can be impeded by the scarcity of accurate disease labels, which is a prominent problem in contemporary studies driven by electronic health records (EHR). Semi-supervised learning (SSL) leverages the large unlabeled sample with auxiliary EHR features to assist the learning with labeled data only and is a potential solution to this issue. In this paper, we develop a novel SSL method for efficient inference of Ising model. Our method first models the outcomes against the auxiliary features, then uses it to project the score function of the supervised estimator onto the EHR features, and incorporates the unlabeled sample to augment the supervised estimator for variance reduction without introducing bias. For the key step of conditional modeling, we propose strategies that can effectively leverage the auxiliary EHR information while maintaining moderate model complexity. In addition, we introduce approaches including intrinsic efficient updates and ensemble, to overcome the potential misspecification of the conditional model that may cause efficiency loss. Our method is justified by asymptotic theory and shown to outperform existing SSL methods through simulation studies. We also illustrate its utility in a real example about several key phenotypes related to frequent ICU admission on MIMIC-III data set.

\end{abstract}

\noindent{\bf Keywords}: Ising model; Semi-supervised learning; Score function; EHR surrogate; Intrinsic efficiency.

\section{Introduction}

\subsection{Background}

In recent years, interactive modes and graphical models of multiple diseases has become a popular topic in biomedical studies \citep{halu2019multiplex,del2019disease,hong2021clinical}. By studying the diseases as non-isolated elements in a graphical model, the researchers are able to reveal the intrinsic relationship among diseases, beyond their marginal risk probabilities and correlations. The detected disease network structures not only provide novel insight and understanding to clinical studies, but also have a great potential to be used for improving clinical decision making from many aspects such as reducing adverse treatment effects and re-purposing old drugs for new diseases \citep{del2019disease}. 

Meanwhile, most existing biomedical network analyses \citep{hong2021clinical,li2022graph} have been based on electronic health record (EHR) as this systematized collection storing digital health information is conveniently accessible on large study cohorts. However, EHR outcomes such as diagnostic codes and clinical mentions are usually informative yet too noisy to represent the true conditions accurately, which could incur severe bias \citep{zhang2019high}. On the other hand, medical chart reviewing by experts is a common approach to obtain accurate (gold-standard) labels for the diseases. Nevertheless, the extensive human efforts and time required by this process make the resulted sets of gold labels not scalable enough to support complicated analyses like graphical modeling and inference. To address such a problem in this paper, we consider a semi-supervised learning (SSL) setup of graphical models, aiming at leveraging a large sample of informative but error-prone auxiliary EHR features to assist learning with the small sample of subjects with gold standard labels for the disease status.

\subsection{Problem Setup}\label{sec:problem_setup}

Let $\by=(y_1,\ldots,y_q)\trans$ denote a binary random vector of $q$ phenotypes. First, we introduce the Ising model for $\by$ of our interests, with its probability mass function formulated as
\begin{equation}
    P(\by|\bw;\btheta)=\frac{1}{C(\bw;\btheta)}\exp\left(\sum_j(\btheta\trans_{jj}\bw)y_j+\sum_{k>j}\theta_{jk}y_jy_k\right),
    \label{equ:Ising model}
\end{equation}
where $C(\bw;\btheta)$ is the partitioning function, $\btheta=(\btheta_{11}\trans,\theta_{12},\ldots,\theta_{(q-1)q},\btheta_{qq}\trans)\trans$ are the model parameters, and $\bw=(1,w_1,\ldots,w_d)\trans$ contains adjustment features for the marginal model of each phenotype, such as age and gender. The Ising model in (\ref{equ:Ising model}) is capable of revealing the conditional dependence among the phenotypes since it implies that
\begin{equation}
    \log\bigg(\frac{P(y_j=1|\by_{\setminus j,\bw},\bw;\btheta)}{1-P_{\btheta}(y_j=1|\by_{\setminus j,\bw},\bw;\btheta)}\bigg)=\btheta_{jj}\trans\bw+\sum_{k:k\neq j}\theta_{jk}y_k=\btheta_j\trans\by_{\setminus j,\bw},
    \label{equ:Ising_meaning}
\end{equation}
where $\btheta_j=(\theta_{j1},\ldots,\btheta_{jj}\trans,\ldots,\theta_{jq})\trans$ and $\by_{\setminus j,\bw}=(y_1,\ldots,y_{j-1},\bw\trans,y_{j+1},\ldots,y_q)\trans$. Equation (\ref{equ:Ising_meaning}) is the conditional logistic model of the phenotype $y_j$ derived under (\ref{equ:Ising model}), with $\theta_{jk}=\theta_{kj}$ encoding the dependence of $y_j$ and $y_k$ conditional on $\bw$ and all the other phenotypes. Therefore, estimating such coefficients $\theta_{jk}$'s is crucial for understanding the network structure of $\by$.

In addition, suppose there are also auxiliary EHR features $\bx=(x_1,\ldots,x_p)\trans$ that are informative to $\by$. We will introduce detailed model structures and assumptions of $\bx$ in Sections \ref{sec:Aug} and \ref{sec:PoS}. In principle, to assist the learning of the network structure involving the interaction among $\by$, $\bx$ needs to be predictive of $\by$ not only marginally on each $y_j$ but also jointly on each pair of $(y_j,y_k)$, or even the whole $\by$. Let $\bz=\{\bx,\bw\}$ denote the whole set of covariates.

We now introduce the SSL setup considered in this paper. Due to the aforementioned difficulty of ascertaining the true disease status, only a small random subset of the subjects are observed on $\by$, while all subjects have the observation of features $\bz$ from EHR. Thus, we denote the collected data set as
\[ 
    \mathscr{D}=\mathscr{L}\cup \mathscr{U},\quad\mathscr{L}=\{\by^i,\bz^i\}_{i=1}^n=\{\by^i,\bx^i,\bw^i\}_{i=1}^n,\quad\mathscr{U}=\{\bz^i\}_{i=n+1}^{n+N}=\{\bx^i,\bw^i\}_{i=n+1}^{n+N}, 
\] 
where $\mathscr{L}$ stands for the labeled data set with the sample size $n$, $\mathscr{U}$ for the unlabeled data with size $N$. Assume that $N\gg n$, and all subjects $i=1,2,\ldots,n+N$ are independent and have the same underlying distribution of $\{\by,\bx,\bw\}$, although we do not observe the phenotypes $\by$ for $i=n+1,\ldots,n+N$. The supervised learning (SL) to be introduced in Section \ref{sec:method:sup} estimates
$\btheta$ in model (\ref{equ:Ising model}) through the logistic regression simply using $\mathscr{L}$, ignoring the large set $\mathscr{U}$ with $\bx$ informative of $\by$. Such insufficient utilization of the data motivates us to consider SSL methods leveraging $\mathscr{U}$ in this paper.

\subsection{Related literature}

Ising models \citep{ising1924beitrag} and Markov random fields \citep[MRF]{kindermann1980markov} have been frequently studied and applied for statistical network analysis in a wide range of fields such as genetics \citep{wang2011learning}, microbiomics \citep{kurtz2015sparse}, biomedical study \citep{hong2021clinical}, and social science \citep{guo2015estimating}. In this era of big data, methodological developments have been made to improve the robustness and efficiency of statistical inference of the Ising model or MRF. For example, \cite{wang2011learning} developed a penalized logistic regression approach for high-dimensional Ising models accounting for spatial correlation among the binary responses (nodes). \cite{cheng2014sparse} proposed a sparse Ising model framework accommodating covariate-specific conditional dependency between the responses. \cite{yang2018semiparametric} proposed and studied a new class of semiparametric exponential family graphical models. Their models are free of any parametric assumption or type specification on each response marginally, and, thus, more flexible and robust than the previous methods for MRF. 

In the context of high-dimensional inference, \cite{xia2018multiple} developed a multiple testing approach to detect between pathway (community) interactions in a gene expression network. \cite{cai2019differential} studied the global and multiple testing on the differential network between two Ising models. Nevertheless, all these existing approaches are designed for the supervised scenario with sufficient observations of the response $\by$ to enable high-quality learning and inference of the graphical models. In the SSL scenario with only a small amount of labeled samples and large unlabeled data with auxiliary features, no methods have been developed yet for efficient statistical analysis of Ising models. Our proposed method actually forms an SSL complement to the above-reviewed literature and can be potentially incorporated with them.

There has notable methodological progression made on semi-supervised statistical learning and inference in the past years. \cite{kawakita2013semi} and \cite{kawakita2014safe} developed density ratio approaches first fitting a importance weighting model between the labeled and unlabeled samples, whose underlying truth is known to be trivially $1$, then using it to weight the regression on labeled sample. The efficiency gain of these methods is a paradox previously studied in the semiparametric literature \citep[e.g.]{robins1992estimating}. \cite{chakrabortty2018efficient} proposed an imputation-based SSL approach that incorporates the unlabeled sample for linear regression after imputing its unobserved outcome with the auxiliary features. Their method is adaptive and safe in the sense that it will not result in larger variance than the supervised estimator. \cite{azriel2022semi} achieved a similar adaptive property with a more simple and efficient SSL method. \cite{gronsbell2022efficient} extended the idea of imputation to handle the logistic model and its validation. We also note other recent advances in this track. For example, \cite{chakrabortty2022semi} addressed SSL quantile regression and \cite{tony2020semisupervised} and \cite{zhang2022high} considered the SSL estimation of the outcome's mean and variance with high-dimensional auxiliary features. However, no existing work in this track addressed or could be directly extended for the SSL of graphical models, with the technical complication and challenges discussed later.

Typically, clinical outcomes like laboratory measures and diagnostic codes are viewed as error-prone yet informative surrogates for the true disease outcomes of our interest. These surrogates can play a central role in EHR driven biomedical studies, especially when obtaining the true or primary outcomes requires chart reviewing or long-term follow-up. As a special case of SSL, surrogate-assisted SSL (SAS) leverages the surrogate outcomes observed on all samples to assist the learning with true outcomes. In specific, \cite{huang2018pie}, \cite{hong2019semi}, and \cite{zhang2020maximum} effectively leveraged the EHR surrogates assuming they are bimodal and independent with the baseline risk factors given the underlying true conditions. \cite{athey2020combining} proposed a new method to combine the observational and experimental data using their share surrogate outcomes, to improve the efficiency of casual inference. \cite{hou2021efficient} considered the SAS of average treatment effect where both the treatment and outcome are approximated using EHR surrogates on the unlabeled sample. \cite{hou2021efficient} extended the imputation-based SSL approach to utilize the EHR surrogates for efficient high-dimensional sparse regression. \cite{zhang2022prior} realized adaptive SAS sparse regression via information-guided regularization. Their method is more robust to EHR surrogates with poor ability to characterize the true outcome. Built upon a general SSL framework for Ising models, our work will highlight its implementation under the SAS setting, with a primary application in EHR studies, and inspired by some existing SAS literature \citep[e.g.]{hong2019semi}.

\subsection{Our contribution}

In this paper, we develop a novel SCore-Induced Semi-Supervised (SCISS) learning approach for Ising models. The core idea of SCISS is to jointly model the outcomes $\by$ against the auxiliary EHR features $\bz$ with the labeled data $\mathscr{L}$, then use it to project the score function of the SL  estimator onto the space of $\bz$, and incorporate both $\mathscr{L}$ and $\mathscr{U}$ to augment SL, which can effectively reduce the variance of the projected part in SL determined by $\bz$ without introducing bias. The small labeled data may not afford fitting a joint model for $\by\sim\bz$ with enough high complexity, which could result in potential misspecification on $\by\sim\bz$ and efficiency loss. To maintain the robustness of our estimator to this issue, we leverage the nature of EHR data to propose two modeling strategies that avoid excessive model complexity while maintaining good performance. The first one is an augmented Ising model with coefficients varying with the EHR features and the second one is built upon the post-hoc generative assumption of EHR surrogates. In addition, we propose two strategies, intrinsic efficient adjustment and optimal allocation, both ensuring our SSL estimator to have no larger variance than the SL estimator.


To our best knowledge, our work is the first one to realize robust and efficient SSL of Ising model in the context of statistical learning. Existing imputation and other approaches for SSL like \cite{chakrabortty2018efficient} and \cite{azriel2022semi} cannot be adopted for graphical models as the score function of the SL estimator is non-linear and non-separable in $\by$. We address this by extending the outcome-imputation strategy in \cite{chakrabortty2018efficient} to more general score-induced projection strategy. Also, our proposed methods in Sections \ref{sec:Aug} and \ref{sec:PoS} serve as novel tools for utilizing EHR surrogates to jointly characterize multiple disease outcomes in SSL. These methods, as well as the intrinsic efficient estimation procedures in Section \ref{sec:misspecification}, are elaborately designed in order to avoid excessive model complexity and achieve robust and efficient SSL for Ising models.

\section{Method}\label{sec:method}

\subsection{Preliminary: supervised estimation}\label{sec:method:sup}
 
In Section \ref{sec:method}, we will propose the procedures to obtain our SCore-Induced Semi-Supervised (SCISS) estimator for the Ising model parameters $\btheta$. We shall start with the standard SL estimator $\widehat\btheta_{\rm SL}$ obtained only using the labeled sample. Motivated by the conditional logistic model (\ref{equ:Ising_meaning}) derived from (\ref{equ:Ising model}), we solve for each $\widecheck\btheta_{j,{\rm SL}}$ with labeled $\mathscr{L}$ through
\begin{equation}\label{equ:SL}
    \mathbf{U}_{j,n}(\btheta_j)=\frac{1}{n}\sum_{i=1}^n \by^{i}_{\setminus j,\bw}\{y_j^i-g(\btheta_j\trans\by^{i}_{\setminus j,\bw})\}=\bzero,{~}j=1,\ldots,q,
\end{equation}
where $g(a)=e^a/(1+e^a)$. Estimating equations (\ref{equ:SL}) is equivalent with the logistic regression of $y_j$ against the remaining vector of variables $\by^{i}_{\setminus j,\bw}$ based on the maximum likelihood estimation (MLE). Due to the underlying mirroring property of $\theta_{jk}$ and $\theta_{kj}$ in $\btheta$, we obtain the final SL estimator $\widehat\btheta_{\rm SL}$ by symmetrizing on $\widecheck\btheta_j$ and $\widecheck\btheta_k$ as $\widehat\theta_{jk,{\rm SL}}=\frac{1}{2}(\widecheck\theta_{jk,{\rm SL}}+\widecheck\theta_{kj,{\rm SL}})$ for all $j\neq k$ and $\widehat\btheta_{jj,{\rm SL}}=\widecheck\btheta_{jj,{\rm SL}}$ for $j=1,\ldots,q$.

Denote the population (true) model parameters as $\overline\btheta$. Following the well-established and standard theory in M-estimation \citep{van2000asymptotic}, we present Lemma \ref{lemma:SL} below, which establishes the asymptotic unbiasedness and normality of $\widehat\btheta_{\rm SL}$ and gives the form of the score function of $\widehat\btheta_{\rm SL}$ determining its asymptotic variance. For the simplicity of notation, we use $\by_{\bw}$ to denote the composition of $\{\by,\bw\}$ when considering the function on $\mathscr{L}$ without the auxiliary $\bx$. 


\begin{lemma}
Under regularity Conditions \ref{cond:1} and \ref{cond:2} {\rm (A)} introduced in Appendix \ref{sec:proof_SL}, we have
\[
n^{\frac{1}{2}}(\widehat{\theta}_{jk,{\rm SL}}-\overline\theta_{jk})=\frac{1}{2}n^{-\frac{1}{2}}\sum_{i=1}^n\{ s_{kj}(\by_{\bw}^i;\overline\btheta_k)+ s_{jk}(\by_{\bw}^i;\overline\btheta_j)\}+o_p(1)
\]
weakly converges to $N(0,\Omega_{jk,{\rm SL}})$, where $s_{jk}(\by_{\bw};\overline\btheta_j)$ is the $k$-th element of the score function of equations (\ref{equ:SL}) $\S_j(\by_{\bw};\btheta_j)=\Sigma_{\btheta_j}^{-1}\by_{\setminus j,\bw}\{y_j-g(\btheta_j\trans\by_{\setminus j,\bw})\}$ for $j=1,2,\ldots,q$, with the hessian matrix $\Sigma_{\btheta_j}={\rm E}[\by_{\setminus j,\bw}\by_{\setminus j,\bw}\trans\dot g(\btheta_j\trans\by_{\setminus j,\bw})]$ and $\Omega_{jk,{\rm SL}}=\frac{1}{4}{\rm E}[\{s_{jk}(\by_{\bw};\overline\btheta_j)+s_{kj}(\by_{\bw};\overline\btheta_k)\}^2]$.
\label{lemma:SL}      
\end{lemma}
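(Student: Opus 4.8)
The plan is to treat each per-node solution $\widecheck\btheta_{j,{\rm SL}}$ of (\ref{equ:SL}) as a standard Z-estimator, obtain its asymptotic linear expansion by the usual M-estimation machinery, and then assemble the symmetrized coordinate $\widehat\theta_{jk,{\rm SL}}=\tfrac12(\widecheck\theta_{jk,{\rm SL}}+\widecheck\theta_{kj,{\rm SL}})$. First I would verify that $\overline\btheta_j$ is the population root of $\mathbf{U}_{j,n}$: by the conditional logistic form (\ref{equ:Ising_meaning}) implied by the Ising model, ${\rm E}[y_j\mid\by_{\setminus j,\bw}]=g(\overline\btheta_j\trans\by_{\setminus j,\bw})$, so iterated expectation gives ${\rm E}[\by_{\setminus j,\bw}\{y_j-g(\overline\btheta_j\trans\by_{\setminus j,\bw})\}]=\bzero$. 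This unbiasedness, together with the identifiability and moment conditions in Conditions \ref{cond:1} and \ref{cond:2}(A), yields consistency $\widecheck\btheta_{j,{\rm SL}}\overset{p}{\to}\overline\btheta_j$ by the standard Z-estimator consistency theorem.

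Next I would derive the influence function. Differentiating $\mathbf{U}_{j,n}$ produces the negative empirical Hessian $n^{-1}\sum_{i}\by^{i}_{\setminus j,\bw}(\by^{i}_{\setminus j,\bw})\trans\dot g(\btheta_j\trans\by^{i}_{\setminus j,\bw})$, which converges uniformly over a neighborhood of $\overline\btheta_j$ to $\Sigma_{\btheta_j}$ by the law of large numbers, and the moment conditions guarantee $\Sigma_{\btheta_j}$ is nonsingular. A one-term Taylor expansion of $\bzero=\mathbf{U}_{j,n}(\widecheck\btheta_{j,{\rm SL}})$ about $\overline\btheta_j$, with consistency controlling the intermediate point, then gives
\[
n^{\frac12}(\widecheck\btheta_{j,{\rm SL}}-\overline\btheta_j)=\Sigma_{\btheta_j}^{-1}n^{-\frac12}\sum_{i=1}^n\by^{i}_{\setminus j,\bw}\{y_j^i-g(\overline\btheta_j\trans\by^{i}_{\setminus j,\bw})\}+o_p(1)=n^{-\frac12}\sum_{i=1}^n\S_j(\by_{\bw}^{i};\overline\btheta_j)+o_p(1),
\]
which identifies $\S_j$ as the per-subject influence function of $\widecheck\btheta_{j,{\rm SL}}$.

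I would then read off coordinates and symmetrize. The coefficient $\widecheck\theta_{jk,{\rm SL}}$ is the $k$-th entry of $\widecheck\btheta_{j,{\rm SL}}$, hence has influence function $s_{jk}(\by_{\bw};\overline\btheta_j)$; symmetrically, $\widecheck\theta_{kj,{\rm SL}}$ has influence function $s_{kj}(\by_{\bw};\overline\btheta_k)$. Using the mirroring property $\overline\theta_{jk}=\overline\theta_{kj}$ and averaging the two expansions produces exactly the claimed linearization of $n^{1/2}(\widehat\theta_{jk,{\rm SL}}-\overline\theta_{jk})$. Since each $\S_j$ is centered by the mean-zero property above and square-integrable under the moment conditions, the Lindeberg--L\'{e}vy central limit theorem applied to the i.i.d.\ summands delivers the $N(0,\Omega_{jk,{\rm SL}})$ limit with $\Omega_{jk,{\rm SL}}=\tfrac14{\rm E}[\{s_{jk}+s_{kj}\}^2]$.

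Most of this is routine M-estimation; the point that demands genuine care is the \emph{joint} treatment of the two expansions. Because $\widecheck\btheta_{j,{\rm SL}}$ and $\widecheck\btheta_{k,{\rm SL}}$ are fit on the same labeled sample, the summands $s_{jk}(\by_{\bw}^{i};\overline\btheta_j)$ and $s_{kj}(\by_{\bw}^{i};\overline\btheta_k)$ are dependent within subject $i$, so the limiting variance does not split additively and the cross-term ${\rm E}[s_{jk}s_{kj}]$ must be retained --- precisely what the squared sum in $\Omega_{jk,{\rm SL}}$ encodes. I expect the main bookkeeping obstacle to be ensuring the $o_p(1)$ remainders of the two separate expansions are controlled jointly (uniform Hessian convergence on a common shrinking neighborhood) rather than only marginally, so that the combined influence-function representation, and hence the single bivariate CLT, is legitimate.
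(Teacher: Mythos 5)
Your proposal is correct and follows essentially the same route as the paper's proof: consistency of each $\widecheck\btheta_{j,{\rm SL}}$ via standard Z-estimation theory (uniform convergence of the estimating equation plus identifiability from the positive-definite Hessian in Condition 2(A)), the asymptotic linear expansion with influence function $\S_j(\by_{\bw};\overline\btheta_j)$, symmetrization using $\overline\theta_{jk}=\overline\theta_{kj}$, and the classical CLT applied to the combined per-subject summand $\tfrac12\{s_{jk}+s_{kj}\}$, which automatically retains the within-subject cross-term in $\Omega_{jk,{\rm SL}}$. Your added details (verifying $\overline\btheta_j$ is the population root via the conditional logistic form, and spelling out the Taylor expansion the paper cites as well known) only make explicit what the paper leaves implicit, and your concern about joint control of the remainders is harmless since the two $o_p(1)$ terms sum to $o_p(1)$.
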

Asymptotic expansion in Lemma \ref{lemma:SL} implies that the variance of $\widehat{\theta}_{jk,{\rm SL}}$ is driven by those of the zero-mean score functions $\S_j(\by_{\bw};\overline\btheta_j)$ and $\S_k(\by_{\bw};\overline\btheta_k)$. This inspires us to reduce the variance of each $\widehat{\theta}_{jk,{\rm SL}}$ through reducing the variance of $n^{-\frac{1}{2}}\sum_{i=1}^n\{ s_{kj}(\by_{\bw}^i;\overline\btheta_k)+ s_{jk}(\by_{\bw}^i;\overline\btheta_j)\}$ leveraging both $\mathscr{L}$ and $\mathscr{U}$. In Section \ref{sec:framework}, we will present the corresponding ideas and details.

\subsection{SCISS framework}\label{sec:framework}


Based on the auxiliary EHR features $\bx$ that are informative to $\by$ and $\S_j(\by_{\bw};\btheta_j)$, our idea is to model $\by$ against $\bz$ with the labeled $\mathscr{L}$, use it to project $\S_j(\by_{\bw};\overline\btheta_j)$ onto the space of $\bz$, and incorporate the projection on $\mathscr{L}$ and $\mathscr{U}$ to augment $\widehat\btheta_{\rm SL}$, which can effectively reduce the variance of the projected part determined by $\bz$ without introducing bias.

In specific, we introduce $P(\overline\by|\bz;\bfeta)=P(\by=\overline\by|\bz)$ for any given $\overline\by\in \{0,1\}^q$ as the conditional probability model for $\by=\overline\by$ given $\bz$ with parameters $\bfeta$. The specific forms and choices on $P(\overline\by|\bz;\bfeta)$ will be introduced with details in Section \ref{sec:impute}. Also, as will be discussed later, our model $P(\overline\by|\bz;\bfeta)$ could be misspecified or of excessive estimation error rate thanks to our bias correction construction. Then we denote by $\m_j(\bz)={\rm E}[\S_j(\by_{\bw};\overline\btheta_j)|\bz]$, and estimate $\m_j(\bz)$ through
\begin{equation}
\widehat\m_j(\bz;\widehat\bfeta,\widehat\btheta_{j,{\rm SL}})=\sum_{\overline\by\in \{0,1\}^q} P(\overline\by|\bz;\widehat\bfeta)\cdot\widehat\S_j(\overline\by_{\bw};\widehat\btheta_{j,{\rm SL}}),
\label{equ:imputation}
\end{equation} 
where $\widehat\bfeta$ is an estimator of $\bfeta$ obtained with the labeled data, and $\widehat\S_j(\by_{\bw};\widehat\btheta_{j,{\rm SL}})={\widehat\Sigma_{\btheta_j}}^{-1}\by_{\setminus j,\bw}\{y_j-g(\btheta_j\trans\by_{\setminus j,\bw})\}$ is an empirical estimate of the score function with $\widehat\Sigma_{\btheta_j}=\frac{1}{n}\sum_{i=1}^n\by_{\setminus j,\bw}^i(\by_{\setminus j,\bw}^i)\trans\dot g(\btheta_j\trans\by^i_{\setminus j,\bw})$. The fitted $\widehat\m_j(\bz;\widehat\bfeta,\widehat\btheta_{j,{\rm SL}})$ can be viewed as an empirical projection of the influence function of $\widecheck\btheta_{j,{\rm SL}}$ onto the features $\bz$ observed on the large sample, neglecting the errors in $\widehat\bfeta$ and $\widehat\btheta_{j,{\rm SL}}$.




Motivated by Lemma \ref{lemma:SL}, to reduce the variance of $\widecheck\btheta_{j,{\rm SL}}$, it is tentative to adjust the SL estimator as
$\widecheck\btheta_{j,{\rm SL}}-n^{-1}\sum_{i=1}^n\widehat\m_j(\bz^i;\widehat\bfeta,\widehat\btheta_{j,{\rm SL}})$, with its expansion becoming approximately
\[
n^{-1}\sum_{i=1}^n \S_j(\by^i_{\bw};\overline\btheta_j)-\widehat\m_j(\bz^i;\widehat\bfeta,\widehat\btheta_{j,{\rm SL}}),
\]
which could potentially achieve variance reduction since $\m_j(\cdot)$ is the projection (conditional mean) of $\S_j(\cdot)$ on $\bz$. However, this construction itself is problematic since the bias could arise from the misspecification of $P(\overline\by|\bz;\bfeta)$ or the estimation errors in $\widehat\bfeta$ and $\widehat\btheta_{j,{\rm SL}}$, which makes the added term $\widehat\m_j(\bz^i;\widehat\bfeta,\widehat\btheta_{j,{\rm SL}})$ not zero-mean. In response to this issue, we incorporate the unlabeled sample $\mathscr{U}$ to construct 
\begin{equation}
        \widecheck\btheta_{j,{\rm SCISS}}=\widecheck\btheta_{j,{\rm SL}}-n^{-1}\sum_{i=1}^n \widehat\m_j(\bz^i;\widehat\bfeta,\widehat\btheta_{j,{\rm SL}})+N^{-1}\sum_{i=n+1}^{n+N}\widehat\m_j(\bz^i;\widehat\bfeta,\widehat\btheta_{j,{\rm SL}}),
        \label{equ:construction}
\end{equation}
with the mean of $\widehat\m_j(\bz^i;\widehat\bfeta,\widehat\btheta_{j,{\rm SL}})$ evaluated on  $\mathscr{U}$ to make up for the above-mentioned bias. Note that $\widehat\m_j(\bz^i;\widehat\bfeta,\widehat\btheta_{j,{\rm SL}})$ can be calculated on $\mathscr{U}$ because it is free of the true outcomes $\by$. Again, due to the mirroring property that $\theta_{jk}=\theta_{kj}$, we obtain the final SCISS estimator $\widehat\btheta_{\rm SCISS}$ as $\widehat\theta_{jk,{\rm SCISS}}=\frac{1}{2}(\widecheck\theta_{jk,{\rm SCISS}}+\widecheck\theta_{kj,{\rm SCISS}})$ for $j\neq k$, and $\widehat\theta_{jj,{\rm SCISS}}=\widecheck\theta_{jj,{\rm SCISS}}$ for each $j$.

Given the asymptotic unbiasedness of the SL estimator established in Lemma \ref{lemma:SL}, our construction in (\ref{equ:construction}) ensures $\widehat\theta_{jk,{\rm SCISS}}$ to be also unbiased because both the non-zero mean of $\widehat\m_j(\bz^i;\widehat\bfeta,\widehat\btheta_{j,{\rm SL}})$ under misspecified $P(\overline\by|\bz;\bfeta)$ and the estimation errors of the nuisance $\widehat\bfeta$ and $\widehat\btheta_{j,{\rm SL}}$ can be canceled out between the two empirical mean terms on $\mathscr{U}$ and $\mathscr{L}$ in (\ref{equ:construction}). In this sense, the difference of these two terms can be viewed as a control variate for $\widecheck\btheta_{j,{\rm SL}}$. Moreover, when our model for $\by|\bz$ is correct, we can show that $\widehat\theta_{jk,{\rm SCISS}}-\overline\theta_{jk}$ is asymptotically equivalent to
\[
\frac{1}{2n}\sum_{i=1}^n\left\{ s_{kj}(\by_{\bw}^i;\overline\btheta_k)+ s_{jk}(\by_{\bw}^i;\overline\btheta_j)-{\rm E}\left[s_{kj}(\by_{\bw};\overline\btheta_k)+ s_{jk}(\by_{\bw};\overline\btheta_j)|\bz=\bz^i\right]\right\},
\]
since the variance term from $\mathscr{U}$ is asymptotically negligible as $N\gg n$. Thus, $\widehat\theta_{jk,{\rm SCISS}}$ can attain strictly smaller variance than $\widehat\theta_{jk,{\rm SL}}$. See Theorem \ref{thm:SSL} and its justification in Appendix \ref{sec:proof_SSL} for details. This result is also closely connected to the theory of efficient influence function in semiparametric literature \citep[e.g.]{hines2022demystifying}. 

However, there are still two technical problems for our framework remained to be solved. One is how to properly specify the conditional model $P(\overline\by|\bz;\bfeta)=P(\by=\overline\by|\bz)$ when the small set $\mathscr{L}$ could not afford us to learn $\by\sim\bz$ nonparametrically or through complex machine learning methods. In this case, it is preferred to leverage the generative relationship between the phenotypes and EHR features for parametric modelling without excessive complexity while maintaining effectiveness, the details of which will be provided in Section \ref{sec:impute}. The second is how to achieve the optimal improvement on SCISS over the SL estimator when our model for $\by|\bz$ is potentially misspecified, at least the safety in the sense that SCISS is always not worse than SL. Our developments for this purpose will be introduced in Section \ref{sec:misspecification}. Incorporated with these techniques, our SCISS framework can provide a robust and efficient estimator for the Ising model with the help of large $\mathscr{U}$, as will be justified in our asymptotic and numerical studies.

\subsection{Construct the conditional model of $\by$}\label{sec:impute}

\subsubsection{Augmented Ising model}\label{sec:Aug}
 
In this section, we introduce our first constructing strategy of $P(\overline\by|\bz;\bfeta)$, an augmented Ising model with its coefficients varying with the EHR features. In various real-life scenarios, the network structure of $\by$, i.e., the conditional association between each pair $y_j$ and $y_k$ may vary with some extraneous factors available to us, just like the auxiliary features $\bx$ in our setting. For example, a diagnostic or laboratory measure indicating high low-density lipoprotein (LDL) cholesterol may encode a larger chance of co-occurrence of heart disease and stroke, both of which are consequences of high LDL. Motivated by this, we propose a conditional probability model augmenting the original Ising model of $\by$ in (\ref{equ:Ising model}) with $\bx$ as:  
\begin{equation}
    P(\by|\bz;\bfeta)=\frac{1}{Z(\bz;\bfeta)}\exp\bigg(\sum_{j=1}^q{\bfeta}_{jj}(\bz)y_j+\sum_{(j,k):1\le j < k  \le q}{\bfeta}_{jk}(\bx)y_j y_k\bigg),
    \label{equ:Augmented-Ising}
\end{equation}
 where $\bfeta(\bz)=\{\bfeta_{11}(\bz),\bfeta_{12}(\bx),\ldots,\bfeta_{(q-1)q}(\bx),\bfeta_{qq}(\bz)\}$ and $Z(\bz;\bfeta)$ is the partition function. A natural way to model $\bfeta(\bz)$ is to parametrize it as a linear function of $\bz$. Specifically, we let $\bfeta_{jk}(\bx)=\bx\trans\bfeta_{jk}$, where $\bfeta_{jk}=(\eta_{jk1},\ldots,\eta_{jkp})\trans$ for $j\neq k$ and ${\bfeta}_{jj}(\bz)=(\bx\trans,\bw\trans)\bfeta_{jj}$, $\bfeta_{jj}=(\eta_{jk1},\ldots,\eta_{jkp},\eta_{jk(p+1)},\ldots,\eta_{jk(p+d+1)})\trans$ for $j=1,\ldots,q$. Similar to (\ref{equ:Ising_meaning}), the conditional model for each $y_j$ under (\ref{equ:Augmented-Ising}) is a logistic regression model with its coefficients varying with $\bx$:
  \[
    \log\bigg(\frac{P(y_j=1|\by_{\setminus j,\bw},\bz)}{1-P(y_j=1|\by_{\setminus j,\bw},\bz)}\bigg)=\bfeta_{jj}(\bz)+\sum_{k:k\neq j}\bfeta_{jk}(\bx)y_k.
 \]
Thus, we estimate the augmented Ising model's parameters in $\bfeta$ by solving each
\begin{equation*}
\mathbf{Q}^{\rm Aug}_{j,n}(\bfeta_{j})
=\frac{1}{n}\sum_{i=1}^n \boldsymbol{\psi}^{i}_{\setminus j,\bw}\{y_j^i-g(\bfeta_{j}\trans\boldsymbol{\psi}^{i}_{\setminus j,\bw})\}+\lambda_n\bfeta_{j}=\bzero,
\end{equation*}
and denote the solution as $\widehat\bfeta_{j}=(\widehat\bfeta_{1}\trans,\ldots,\widehat\bfeta_{q}\trans)\trans$, where $\lambda_n$ is a small tuning parameter to ensure stable training (e.g., taking as the $o(n^{-\frac{1}{2}})$ rate), and
 \[\boldsymbol{\psi}_{\setminus j,\bw}=(\bx\trans y_1,\ldots,\bx\trans y_{j-1},(\bx\trans,\bw\trans),\bx\trans y_{j+1},\ldots,\bx\trans y_q)\trans.\]


\subsubsection{Post-hoc surrogate model}\label{sec:PoS}

Next, we introduce our second strategy to extract $P(\overline\by|\bz;\bfeta)$, which is motivated by and designed for post-hoc surrogates in EHR. We consider the scenario with the number of auxiliary features $p=q$ and for each outcome $y_j$, there is a main surrogate variable $x_j$ informative of $y_j$ and satisfying that $x_j\perp \by_{\setminus j}| \{y_j,\bw\}$ where $\by_{\setminus j}=(y_1,\ldots,y_{j-1},y_{j+1},\ldots,y_q)\trans$. Taking type II diabetes (T2D) and stroke as an example, these two chronic diseases may co-occur in some period and result in hospital visits and increased counts of their diagnostic (ICD) codes in $\bx$. Meanwhile, given the status of T2D, its main surrogate, i.e. the ICD count tends to be not dependent of other diseases like stroke, corresponding to our conditional independence assumption $x_j\perp \by_{\setminus j}| \{y_j,\bw\}$. Using this assumption, we can formulate the model for $\by\sim\bz$ through Bayes' theorem:
\begin{equation}
    P(\by|\bz;\bfeta)=\frac{P(\bx,\by|\bw;\bfeta)}{P(\bx|\bw;\bfeta)}=\frac{P(\bx|\by,\bw;\bxi)P(\by|\bw;\btheta)}{\sum_{\overline\by\in\{0,1\}^q}P(\bx,\overline\by|\bw;\bfeta)}=\frac{P(\by|\bw;\btheta)\prod_{j=1}^q P(x_j|y_j,\bw;\bxi_j)}{\sum_{\overline\by\in\{0,1\}^q}P(\bx,\overline\by|\bw;\bfeta)},
    \label{equ:PoS_Bayesian}
\end{equation}
where $\bfeta=\{\btheta,\bxi\}$ with $\btheta$ and $\bxi=\{\bxi_j:j=1,2,\dots,q\}$ used for the two components on the right hand side of $P(\bx,\by|\bw;\bfeta)=P(\bx|\by,\bw;\bxi)P(\by|\bw;\btheta)$. By $x_j\perp \by_{\setminus j}| \{y_j,\bw\}$, we can further decompose the term $P(\bx|\by,\bw;\bxi)$ in (\ref{equ:PoS_Bayesian}) as $\prod_{j=1}^q P(x_j|y_j,\bw;\bxi_j)$. 

Based on (\ref{equ:PoS_Bayesian}), we are able to characterize $P(\by|\bz;\bfeta)$ by taking $\btheta$ as $\widehat\btheta_{\rm SL}$ obtained in Section \ref{sec:method:sup} and $P(\bx|\by,\bw;\bxi)$ as $P(\bx|\by,\bw;\bxi)=\Pi_{j=1}^q P(x_j|y_j,\bw;\widehat\bxi_j)$ where each $\widehat\bxi_j$ is derived using the labeled data $\mathscr{L}$. Here, we shall list examples of $P(x_j|y_j,\bw;\bxi_j)$ for different types of frequently encountered EHR surrogates $\bx$. (I) Gaussian linear models for continuous surrogates: $x_j\mid \{y_j,\bw\}\sim N\{(\bw\trans,y_j)\bxi^{\dagger}_j, \sigma^2_j\}$ and $\bxi_j=\{\bxi^{\dagger}_j,\sigma^2_j\}$; (II) Logistic models for binary $x_j$: $P(x_j=1|y_j,\bw;\bxi_j)=g\{(\bw\trans,y_j)\bxi_j\}$; (III) Poisson models for counting surrogate: $x_j\sim {\rm Poi}({\rm exp}\{(\bw\trans,y_j)\bxi_j)\})$.


At last, we establish the asymptotic properties of $\widehat\btheta_{\rm SCISS}$ in Theorem \ref{thm:SSL} and the proof is provided in Appendix \ref{sec:proof_SSL}. Based on this, we justify in Corollary \ref{coro:1} that when the model for $\by|\bz$ is correctly specified and not null, $\widehat\btheta_{\rm SCISS}$ is strictly more efficient than $\widehat\btheta_{\rm SL}$, corresponding to our heuristic discussion in Section \ref{sec:framework}. Also, we can quantify the uncertainty of $\widehat\btheta_{\rm SCISS}$ through the empirical version of its standard error given in Theorem \ref{thm:SSL}; see equation (\ref{equ:var}) in the next section. Note that Theorem \ref{thm:SSL} is applicable to both Augmented Ising model and Post-hoc surrogate model, both of which are justified in Appendix \ref{sec:proof_SSL} as well.
\begin{theorem}
\label{thm:SSL}
Under Condition $\ref{cond:1}-\ref{cond:2}$, $\widehat\btheta_{\rm SCISS}\xrightarrow{p}\overline\btheta$, and
\begin{equation*}
    \begin{aligned}
    n^{\frac{1}{2}}(\widehat{\theta}_{jk,{\rm SCISS}}-\overline\theta_{jk})=&\frac{1}{2}n^{-\frac{1}{2}}\sum_{i=1}^{n}\{ s_{jk}(\by_{\bw}^i;\overline\btheta_j)+ s_{kj}(\by_{\bw}^i;\overline\btheta_k)- m_{jk}(\bz^i;\overline\bfeta,\overline\btheta_j)- m_{kj}(\bz^i;\overline\bfeta,\overline\btheta_k)\}\\
    &+\frac{1}{2}n^{\frac{1}{2}}N^{-1}\sum_{i=n+1}^{n+N}\{ m_{jk}(\bz^i;\overline\bfeta,\overline\btheta_j)+ m_{kj}(\bz^i;\overline\bfeta,\overline\btheta_k)\}+o_p(1)
    \end{aligned}
\end{equation*}
weakly converges to $N(0,\Omega_{jk,{\rm SCISS}})$, where 
\[
\Omega_{jk,{\rm SCISS}}=\frac{1}{4}{\rm Var}\left\{s_{jk}(\by_{\bw};\overline\btheta_j)+s_{kj}(\by_{\bw};\overline\btheta_k)-m_{jk}(\bz;\overline\bfeta,\overline\btheta_j)-m_{kj}(\bz;\overline\bfeta,\overline\btheta_k)\right\},
\]
and $m_{jk}(\bz;\overline\bfeta,\overline\btheta_j)$ is the $k$-th element of $\m_j(\bz;\overline\bfeta,\overline\btheta_j)$.
\end{theorem}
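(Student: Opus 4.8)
The plan is to establish the stated stochastic expansion for each non-symmetrized component $\widecheck\theta_{jk,{\rm SCISS}}$ defined through (\ref{equ:construction}), then average over the pairs $(j,k)$ and $(k,j)$ and invoke the central limit theorem together with Slutsky's lemma. Writing the correction term as $\Delta_{jk}=n^{-1}\sum_{i=1}^n\widehat m_{jk}(\bz^i;\widehat\bfeta,\widehat\btheta_{j,{\rm SL}})-N^{-1}\sum_{i=n+1}^{n+N}\widehat m_{jk}(\bz^i;\widehat\bfeta,\widehat\btheta_{j,{\rm SL}})$, I would first apply Lemma \ref{lemma:SL} to get $n^{1/2}(\widecheck\theta_{jk,{\rm SL}}-\overline\theta_{jk})=n^{-1/2}\sum_{i=1}^n s_{jk}(\by_{\bw}^i;\overline\btheta_j)+o_p(1)$, so that the task reduces to analysing $n^{1/2}\Delta_{jk}$ and showing it is asymptotically $n^{-1/2}\sum_{i=1}^n m_{jk}(\bz^i;\overline\bfeta,\overline\btheta_j)-n^{1/2}N^{-1}\sum_{i=n+1}^{n+N} m_{jk}(\bz^i;\overline\bfeta,\overline\btheta_j)$.

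The heart of the argument is to replace the estimated nuisances $(\widehat\bfeta,\widehat\btheta_{j,{\rm SL}},\widehat\Sigma_{\btheta_j})$ inside $\Delta_{jk}$ by their probability limits $(\overline\bfeta,\overline\btheta_j,\Sigma_{\btheta_j})$ at cost $o_p(n^{-1/2})$. Setting $h(\bz)=\widehat m_{jk}(\bz;\widehat\bfeta,\widehat\btheta_{j,{\rm SL}})-m_{jk}(\bz;\overline\bfeta,\overline\btheta_j)$, the replacement error equals $n^{1/2}\Delta_{jk}$ evaluated at $h$, which I would decompose as
\[
n^{1/2}(\mathbb{P}^{\mathscr{L}}_n-\mathbb{P}^{\mathscr{U}}_N)[h]=\mathbb{G}^{\mathscr{L}}_n[h]-n^{1/2}(\mathbb{P}^{\mathscr{U}}_N-P)[h],
\]
where $\mathbb{P}^{\mathscr{L}}_n,\mathbb{P}^{\mathscr{U}}_N$ are the labeled and unlabeled empirical measures, $P$ is the common law of $\bz$, and $\mathbb{G}^{\mathscr{L}}_n=n^{1/2}(\mathbb{P}^{\mathscr{L}}_n-P)$. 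The decisive point is that the common mean $P[h]$ — which need not vanish under a misspecified $\by\mid\bz$ model — enters the two averages with opposite signs and cancels exactly; this is the analytic form of the $\mathscr{L}$--$\mathscr{U}$ bias correction described after (\ref{equ:construction}). The first term $\mathbb{G}^{\mathscr{L}}_n[h]$ is handled by stochastic equicontinuity: consistency $\widehat\bfeta\xrightarrow{p}\overline\bfeta$, $\widehat\btheta_{j,{\rm SL}}\xrightarrow{p}\overline\btheta_j$, $\widehat\Sigma_{\btheta_j}\xrightarrow{p}\Sigma_{\btheta_j}$ forces $\|h\|_{L_2(P)}\xrightarrow{p}0$, so the Donsker property of the smooth, finite-dimensionally indexed class $\{m_{jk}(\cdot;\bfeta,\btheta_j)\}$ gives $\mathbb{G}^{\mathscr{L}}_n[h]=o_p(1)$. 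The second term is controlled by $N\gg n$: conditionally on $\mathscr{L}$ (independent of $\mathscr{U}$) it has mean zero and variance $nN^{-1}{\rm Var}_{\bz}[h]=o_p(1)$. This yields the desired form of $n^{1/2}\Delta_{jk}$, and combining with the SL expansion and symmetrizing over $(j,k),(k,j)$ produces exactly the displayed expansion.

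For the limiting law I would apply the central limit theorem to the \emph{combined} mean-zero statistic rather than to either summand alone, because under misspecification $n^{1/2}{\rm E}[m_{jk}]$ diverges. Centering the labeled and unlabeled sums by $\mu={\rm E}[m_{jk}+m_{kj}]$ shows that the two copies of $\tfrac12 n^{1/2}\mu$ cancel, leaving an average of i.i.d. mean-zero summands plus an unlabeled fluctuation; the former converges to $N(0,\tfrac14{\rm Var}\{s_{jk}+s_{kj}-m_{jk}-m_{kj}\})$ while the latter has variance $O(n/N)\to0$, so Slutsky's lemma delivers $N(0,\Omega_{jk,{\rm SCISS}})$. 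Consistency then follows a fortiori, since $\Delta_{jk}=O_p(n^{-1/2})\xrightarrow{p}0$ and $\widecheck\theta_{jk,{\rm SL}}\xrightarrow{p}\overline\theta_{jk}$.

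I expect the main obstacle to be the stochastic-equicontinuity step: rigorously verifying the Donsker/entropy condition for $\{m_{jk}(\cdot;\bfeta,\btheta_j)\}$, which needs Lipschitz control of $\widehat m_j$ in all its nuisance arguments together with moment bounds on $\bz$, and checking that both the Augmented Ising model of Section \ref{sec:Aug} and the Post-hoc surrogate model of Section \ref{sec:PoS} yield nuisance estimators $\widehat\bfeta$ consistent for a well-defined limit $\overline\bfeta$, so that the same machinery applies uniformly to both constructions. A reassuring structural feature, worth emphasizing, is that the $\mathscr{L}$--$\mathscr{U}$ cancellation removes any need for an influence-function expansion of $\widehat\bfeta$ (or of $\widehat\btheta_{j,{\rm SL}}$ and $\widehat\Sigma_{\btheta_j}$) inside the correction term — mere consistency suffices — which is precisely the mechanism keeping $\widehat\btheta_{\rm SCISS}$ unbiased even when the conditional model is misspecified.
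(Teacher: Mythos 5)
Your proposal is correct, but it follows a genuinely different route from the paper's own proof. The paper proceeds by explicit Taylor expansion: it first establishes root-$n$ consistency of the nuisance estimator, $n^{\frac{1}{2}}(\widehat\bfeta-\overline\bfeta)=O_p(1)$, as a separate lemma (proved by standard M-estimation arguments under Conditions \ref{cond:2}(B),(C), separately for the augmented Ising and post-hoc surrogate models), and likewise uses $n^{\frac{1}{2}}(\widehat\btheta_{j,{\rm SL}}-\overline\btheta_j)=O_p(1)$; it then expands $\widehat m_{jk}(\bz;\widehat\bfeta,\widehat\btheta_{j,{\rm SL}})$ in both $\bfeta$ and $\btheta_j$ around their limits, and kills the remainder terms by noting that each gradient term appears as $O_p(1)$ times the \emph{difference} between the labeled and unlabeled empirical averages of the same gradient function, which is $o_p(1)$ by the law of large numbers. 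Your argument replaces this Taylor-plus-rate machinery with an empirical-process decomposition: the exact cancellation of $P[h]$ between the two samples, asymptotic equicontinuity (Donsker) to dispose of $\mathbb{G}^{\mathscr{L}}_n[h]$, and a conditional Chebyshev bound for the unlabeled fluctuation. The trade-off is real: the paper's route is elementary and self-contained, and its stated Hessian conditions directly deliver the $O_p(n^{-1/2})$ rates it relies on; your route demands only \emph{consistency} of $(\widehat\bfeta,\widehat\btheta_{j,{\rm SL}},\widehat\Sigma_{\btheta_j})$ --- no rate at all --- at the price of verifying a Donsker/entropy condition for the class $\{m_{jk}(\cdot;\bfeta,\btheta_j,\Sigma)\}$, which is routine here since the class is smoothly and finite-dimensionally parametrized and $\bz$ has compact support by Condition \ref{cond:1}. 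Your rate-free argument in fact matches the paper's informal claim in Section \ref{sec:framework} that the conditional model may be ``of excessive estimation error rate'' better than the paper's own proof does. You are also more careful than the paper in the final CLT step: under misspecification ${\rm E}[m_{jk}]\neq 0$, so the labeled and unlabeled sums must be jointly centered before invoking the CLT and Slutsky (the paper's proof glosses over this by writing the unlabeled contribution's variance as a raw second moment, though its theorem statement correctly uses a variance). The one point to keep in view is that your approach still needs the paper's Lemma on the nuisance limits to the extent of guaranteeing that $\overline\bfeta$ is well defined and $\widehat\bfeta\xrightarrow{p}\overline\bfeta$ for both conditional-model constructions, which you correctly flag as a required verification rather than assume away.
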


\begin{corollary}
When the conditional model for $\by$ given $\bz$ is correctly specified, i.e. $m_{jk}(\bz;\overline\bfeta,\overline\btheta_j)$ is the true conditional mean of $s_{jk}(\by_{\bw};\overline\btheta_j)$ on $\bz$, we have $\Omega_{jk,{\rm SL}}\geq \Omega_{jk,{\rm SCISS}}$ for $j,k=1,\ldots,q$, where the equality between $\Omega_{jk,{\rm SL}}$ and $\Omega_{jk,{\rm SCISS}}$ only holds when the variance of $m_{jk}(\bz;\overline\bfeta,\overline\btheta_j)$ is zero.
\label{coro:1}
\end{corollary}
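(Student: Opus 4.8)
The plan is to read off the two asymptotic variances from Lemma \ref{lemma:SL} and Theorem \ref{thm:SSL} and recognize the statement as a pure orthogonal-projection (law of total variance) fact. Abbreviate the summed SL influence term as $S_{jk} := s_{jk}(\by_{\bw};\overline\btheta_j) + s_{kj}(\by_{\bw};\overline\btheta_k)$ and its $\bz$-projection as $M_{jk} := m_{jk}(\bz;\overline\bfeta,\overline\btheta_j) + m_{kj}(\bz;\overline\bfeta,\overline\btheta_k)$. Then Lemma \ref{lemma:SL} gives $\Omega_{jk,{\rm SL}} = \tfrac14{\rm E}[S_{jk}^2]$ and Theorem \ref{thm:SSL} gives $\Omega_{jk,{\rm SCISS}} = \tfrac14{\rm Var}(S_{jk} - M_{jk})$. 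The first step I would take is to record the elementary fact that each score is mean-zero at the truth (stationarity of the estimating equations (\ref{equ:SL}), already invoked in Lemma \ref{lemma:SL}), so ${\rm E}[S_{jk}] = 0$ and hence $\Omega_{jk,{\rm SL}} = \tfrac14{\rm Var}(S_{jk})$; this puts both $\Omega$'s on equal footing as genuine variances.

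The heart of the argument is the correct-specification hypothesis, which says exactly that $M_{jk} = {\rm E}[S_{jk}\mid\bz]$. I would then invoke orthogonality of the conditional-mean residual: because $M_{jk}$ is $\sigma(\bz)$-measurable and $S_{jk}-M_{jk}$ is the projection residual, the tower property yields ${\rm E}[(S_{jk}-M_{jk})M_{jk}] = {\rm E}[M_{jk}\,{\rm E}(S_{jk}-M_{jk}\mid\bz)] = 0$, while ${\rm E}[M_{jk}] = {\rm E}[S_{jk}] = 0$. Consequently ${\rm Cov}(S_{jk}-M_{jk},\,M_{jk}) = 0$, and the law of total variance gives the Pythagorean identity ${\rm Var}(S_{jk}) = {\rm Var}(S_{jk}-M_{jk}) + {\rm Var}(M_{jk})$, that is
\[
\Omega_{jk,{\rm SL}} - \Omega_{jk,{\rm SCISS}} = \tfrac14\,{\rm Var}(M_{jk}) \ge 0 .
\]
The equality characterization then falls out immediately: $\Omega_{jk,{\rm SL}} = \Omega_{jk,{\rm SCISS}}$ if and only if ${\rm Var}(M_{jk}) = 0$, i.e.\ the projected score $m_{jk}(\bz;\overline\bfeta,\overline\btheta_j) + m_{kj}(\bz;\overline\bfeta,\overline\btheta_k)$ is almost surely constant, hence identically $0$ given its zero mean. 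This is the non-degeneracy (``not null'') condition in the statement and formalizes the heuristic of Section \ref{sec:framework} that SCISS strictly beats SL precisely when $\bz$ carries nontrivial information about the score.

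This proof has no serious obstacle; it is a one-line variance decomposition once the right objects are identified. The only points requiring care are (i) confirming the zero-mean property so the two $\Omega$'s are true variances and the cross term's mean-correction vanishes, and (ii) using correct specification \emph{exactly} — it is the identity $M_{jk} = {\rm E}[S_{jk}\mid\bz]$, not merely a consistent nuisance fit, that forces the cross covariance to zero. I would flag that under misspecification this cross term need not vanish, which is exactly why Theorem \ref{thm:SSL} retains ${\rm Var}(S_{jk}-M_{jk})$ in full rather than splitting it, and motivates the robustness constructions of Section \ref{sec:misspecification}.
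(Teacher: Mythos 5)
Your proof is correct and follows essentially the same route as the paper's: the appendix proof likewise reads off the two asymptotic variances from Lemma \ref{lemma:SL} and Theorem \ref{thm:SSL} and then invokes the law-of-total-variance fact (``the variance of ${\rm E}(\mathbf{X}_2|\mathbf{X}_1)$ is no larger than the variance of $\mathbf{X}_2$'') under correct specification. Your write-up is if anything more complete, since you make the zero-mean and orthogonality steps explicit and derive the equality characterization ${\rm Var}\{m_{jk}(\bz;\overline\bfeta,\overline\btheta_j)+m_{kj}(\bz;\overline\bfeta,\overline\btheta_k)\}=0$, which the paper asserts in the corollary but does not spell out in its proof.
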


\section{Overcome model misspecification}\label{sec:misspecification}

Corollary \ref{coro:1} establishes that SCISS is more efficient than the SL estimator when our model for $\by|\bz$ in Section \ref{sec:Aug} or \ref{sec:PoS} is correctly specified and $m_{jk}(\bz;\overline\bfeta,\overline\btheta_j)$ is the conditional mean of $s_{jk}(\by_{\bw};\overline\btheta_j)$ on $\bz$. Nevertheless, under misspecified $\by|\bz$, $\widehat\theta_{jk, {\rm SCISS}}$ defined in Section \ref{sec:method} is not guaranteed to have an essentially smaller variance than SL, and could be even of a larger one than the latter occasionally. In this section, we will propose two approaches to further enhance the performance of SCISS under potentially misspecified models on $\by|\bz$, and ensure it to have no larger variance than the SL estimator in whatever cases.



 
\subsection{Intrinsic efficient SS estimation}\label{sec:intrinsic}

Motivated by Theorem \ref{thm:SSL}, the asymptotic variance of our SCISS estimator on each $\theta_{jk}$ is a function of the nuisance conditional model parameter $\bfeta$, with its empirical version formed as 
\begin{equation}
\begin{split}
    \widehat\sigma^2_{jk}(\bfeta)=&\frac{1}{n}\sum_{i=1}^n\left\{\widehat s_{jk}(\by_{\bw}^i;\widehat\btheta_{j,{\rm SL}})+\widehat s_{kj}(\by_{\bw}^i;\widehat\btheta_{k,{\rm SL}})-\widehat m_{jk}(\bz^i;\bfeta,\widehat\btheta_{j,{\rm SL}})-\widehat m_{kj}(\bz^i;\bfeta,\widehat\btheta_{k,{\rm SL}})\right\}^2\\
    &+\left\{\frac{1}{n}\sum_{i=1}^n\widehat s_{jk}(\by_{\bw}^i;\widehat\btheta_{j,{\rm SL}})+\widehat s_{kj}(\by_{\bw}^i;\widehat\btheta_{k,{\rm SL}})-\widehat m_{jk}(\bz^i;\bfeta,\widehat\btheta_{j,{\rm SL}})-\widehat m_{kj}(\bz^i;\bfeta,\widehat\btheta_{k,{\rm SL}})\right\}^2
\end{split}
    \label{equ:var}
\end{equation}
no matter the model for $\by\sim\bz$ is correct or not. Interestingly, $\widehat\sigma^2_{jk}(\bfeta)$ will become the empirical variance of the SL estimator when our conditional model $P(\by|\bz;\bfeta)$ parameterized by $\bfeta$ degenerates to $P(\by|\bw;\btheta)$, the distribution of $\by|\bw$. This happens when all coefficients for the auxiliary $\bx$ equal to zero in the augmented Ising model introduced in Section \ref{sec:Aug}, and when $\bxi=\bzero$ for the post-hoc surrogate model in Section \ref{sec:PoS}. In both cases, there exists some $\bfeta_{{\rm null}}$ such that $\widehat\sigma^2_{jk}(\bfeta_{{\rm null}})$ is consistent to $\Omega_{jk,{\rm SL}}$, the asymptotic variance of $n^{\frac{1}{2}}(\widehat{\theta}_{jk,{\rm SL}}-\overline\theta_{jk})$ given by Lemma \ref{lemma:SL}.

Inspired by our above discussion and existing literature in intrinsic efficient (INTR) estimation  like \cite{rotnitzky2012improved} and \cite{gronsbell2022efficient}, our main idea is to use the $\bfeta$ directly minimizing $\widehat\sigma^2_{jk}(\bfeta)$ to construct SCISS, instead of using the MLE of the potentially misspecified models in Sections \ref{sec:Aug} and \ref{sec:PoS}. In specific, we first try to extract
\begin{equation}
    \widehat\bfeta_{jk,\rm INTR}=\mathop{\rm argmin}_{\bfeta}\widehat\sigma_{jk}^2(\bfeta), 
\label{equ:intrinsic:eta}    
\end{equation}
then plug $\widehat\bfeta=\widehat\bfeta_{jk,\rm INTR}$ in (\ref{equ:construction}) to obtain $\widecheck\btheta_{j,{\rm INTRI}}$, and finally take $\widehat\theta_{jk,{\rm INTR}}=\frac{1}{2}(\widecheck\theta_{jk,{\rm INTR}}+\widecheck\theta_{kj,{\rm INTR}})$ again from symmetricity. In this way, we ensure the variance of the SCISS estimator to be not larger than SL as a special case with $\bfeta_{{\rm null}}$ in our framework. For asymptotic analysis of $\widehat\btheta_{jk, {\rm INTR}}$, we define $\overline\bfeta_{jk,{\rm INTR}}=\mathop{\rm argmin}\sigma_{jk}^2(\bfeta)$, where $\sigma_{jk}^2(\bfeta)={\rm Var}\{s_{jk}(\by_{\bw};\overline\btheta_j)+s_{kj}(\by_{\bw};\overline\btheta_k)-m_{jk}(\bz;\bfeta,\overline\btheta_j)-m_{kj}(\bz;\bfeta,\overline\btheta_k)\}$, and present Theorem \ref{thm:intri} that is justified in Appendix \ref{sec:proof_intri}.
\begin{theorem}
Under Condition \ref{cond:1}-\ref{cond:intri} introduced in Appendix \ref{sec:proof_intri}, $\widehat\btheta_{\rm INTR}\xrightarrow{p}\overline\btheta$, and
\begin{equation*}
\begin{aligned}
n^{\frac{1}{2}}(\widehat\theta_{jk,{\rm INTR}}-\overline\theta_{jk})=&\frac{1}{2}n^{-\frac{1}{2}}\sum_{i=1}^{n}\{s_{jk}(\by_{\bw}^i;\overline\btheta_j)+ s_{kj}(\by_{\bw}^i;\overline\btheta_k)- m_{jk}(\bz^i;\overline\bfeta_{jk,{\rm INTR}},\overline\btheta_j)- m_{kj}(\bz^i;\overline\bfeta_{kj,{\rm INTR}},\overline\btheta_k)\}\\
&+\frac{1}{2}n^{\frac{1}{2}}N^{-1}\sum_{i=n+1}^{n+N}\{ m_{jk}(\bz^i;\overline\bfeta_{jk,{\rm INTR}},\overline\btheta_j)+ m_{kj}(\bz;\overline\bfeta_{kj,{\rm INTR}},\overline\btheta_k)\}+o_p(1).
\end{aligned}
\end{equation*}
(i) When the model $P(\by|\bz;\bfeta)$ is correctly specified for $\by|\bz$, $\widehat\theta_{jk,{\rm INTR}}$ is asymptotically equivalent to $\widehat\theta_{jk,{\rm SCISS}}$. (ii) The asymptotic variance of $n^{\frac{1}{2}}(\widehat\theta_{jk,{\rm INTR}}-\overline\theta_{jk})$ is minimized among the estimators constructed under the model $P(\by|\bz;\bfeta)$. Consequently, it is always less than or equal to the asymptotic variance of both $n^{\frac{1}{2}}(\widehat\theta_{jk,{\rm SL}}-\overline\theta_{jk})$ and $n^{\frac{1}{2}}(\widehat\theta_{jk,{\rm SCISS}}-\overline\theta_{jk})$ constructed using the methods in Section \ref{sec:Aug} or \ref{sec:PoS}.
\label{thm:intri}
\end{theorem}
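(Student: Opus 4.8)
\section*{Proof proposal for Theorem \ref{thm:intri}}

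The plan is to regard $\widehat\theta_{jk,{\rm INTR}}$ as a member of the family of SCISS estimators indexed by the nuisance parameter $\bfeta$, and to show that replacing $\bfeta$ by the data-driven minimizer $\widehat\bfeta_{jk,{\rm INTR}}$ of $\widehat\sigma^2_{jk}(\bfeta)$ leaves the first-order expansion unchanged relative to plugging in its population limit $\overline\bfeta_{jk,{\rm INTR}}$. First I would establish the consistency $\widehat\bfeta_{jk,{\rm INTR}}\xrightarrow{p}\overline\bfeta_{jk,{\rm INTR}}$ by a standard argmin argument: show that $\widehat\sigma^2_{jk}(\bfeta)$ converges uniformly over the relevant parameter set to $\sigma^2_{jk}(\bfeta)$, using a uniform law of large numbers together with the already-established consistency of $\widehat\btheta_{\rm SL}$ and $\widehat\Sigma_{\btheta_j}$ that enter $\widehat m_{jk}$, and then invoke the well-separatedness of the minimizer of $\sigma^2_{jk}$ guaranteed by the regularity conditions (Condition \ref{cond:intri}).

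The crux is the insensitivity of the estimator to the plugged-in $\bfeta$. Writing $\widecheck\theta_{jk}(\bfeta)$ for the construction (\ref{equ:construction}) as a function of $\bfeta$, the $\bfeta$-dependence enters only through the difference $-n^{-1}\sum_{i=1}^n\widehat m_{jk}(\bz^i;\bfeta,\widehat\btheta_{j,{\rm SL}})+N^{-1}\sum_{i=n+1}^{n+N}\widehat m_{jk}(\bz^i;\bfeta,\widehat\btheta_{j,{\rm SL}})$. Differentiating, the gradient $\partial_\bfeta\widecheck\theta_{jk}(\bfeta)$ equals the difference of the labeled and unlabeled empirical means of $\partial_\bfeta m_{jk}(\bz;\bfeta,\cdot)$, whose common population expectation cancels; the two centered averages are $O_p(n^{-1/2})$ and $O_p(N^{-1/2})$ respectively, so $\partial_\bfeta\widecheck\theta_{jk}(\bfeta)=O_p(n^{-1/2})$ over a neighborhood of $\overline\bfeta_{jk,{\rm INTR}}$. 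A first-order Taylor expansion then gives $n^{1/2}\{\widecheck\theta_{jk}(\widehat\bfeta_{jk,{\rm INTR}})-\widecheck\theta_{jk}(\overline\bfeta_{jk,{\rm INTR}})\}=\{n^{1/2}\partial_\bfeta\widecheck\theta_{jk}(\bfeta^{*})\}(\widehat\bfeta_{jk,{\rm INTR}}-\overline\bfeta_{jk,{\rm INTR}})=O_p(1)\cdot o_p(1)=o_p(1)$, so the estimated nuisance can be replaced by its limit at no first-order cost. Applying the expansion of Theorem \ref{thm:SSL} at the fixed limiting value $\bfeta=\overline\bfeta_{jk,{\rm INTR}}$ (the bias-correction structure of (\ref{equ:construction}) makes that expansion valid for any limiting $\bfeta$), symmetrizing over $(j,k)$, and collecting terms then yields the stated asymptotic linear expansion and the normal limit; the consistency $\widehat\btheta_{\rm INTR}\xrightarrow{p}\overline\btheta$ follows from the same steps.

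For part (i), under correct specification the probability limit $\overline\bfeta$ of the MLE makes $m_{jk}(\bz;\overline\bfeta,\overline\btheta_j)+m_{kj}(\bz;\overline\bfeta,\overline\btheta_k)$ equal to the $L_2$ projection ${\rm E}[s_{jk}(\by_{\bw};\overline\btheta_j)+s_{kj}(\by_{\bw};\overline\btheta_k)\mid\bz]$, which minimizes the residual variance over all functions of $\bz$ and hence over the parametric subfamily; thus $\overline\bfeta$ is itself a minimizer of $\sigma^2_{jk}(\bfeta)$, so $\overline\bfeta_{jk,{\rm INTR}}$ and $\overline\bfeta$ give the same residual (a.s. unique in $L_2$) and therefore the same influence function, making $\widehat\theta_{jk,{\rm INTR}}$ asymptotically equivalent to $\widehat\theta_{jk,{\rm SCISS}}$. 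For part (ii), the asymptotic variance of the generic estimator at a fixed limit $\bfeta$ is exactly $\tfrac14\sigma^2_{jk}(\bfeta)$ by Theorem \ref{thm:SSL}, so by definition $\overline\bfeta_{jk,{\rm INTR}}=\mathop{\rm argmin}\sigma^2_{jk}(\bfeta)$ attains the minimum over the whole family; specializing to $\bfeta=\bfeta_{\rm null}$ (which reproduces $\Omega_{jk,{\rm SL}}$, as noted before (\ref{equ:intrinsic:eta})) and to $\bfeta$ equal to the probability limit of the MLE used by SCISS (which reproduces $\Omega_{jk,{\rm SCISS}}$) yields the two claimed inequalities.

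I expect the main obstacle to be the uniform $O_p(n^{-1/2})$ control of $\partial_\bfeta\widecheck\theta_{jk}(\bfeta)$ over a neighborhood of $\overline\bfeta_{jk,{\rm INTR}}$: this requires a stochastic-equicontinuity or empirical-process argument (for example a bracketing or Donsker condition on the class $\{\partial_\bfeta m_{jk}(\cdot;\bfeta)\}$) to upgrade the pointwise rate to a uniform one, and care is needed because $\widehat\bfeta_{jk,{\rm INTR}}$ is only known to be consistent, with no a priori $\sqrt{n}$-rate, so the Taylor remainder must be controlled using consistency alone. Handling simultaneously the estimation errors in $\widehat\btheta_{j,{\rm SL}}$ and $\widehat\Sigma_{\btheta_j}$ that also enter $\widehat m_{jk}$ adds bookkeeping, though these are already treated in the proof of Theorem \ref{thm:SSL}.
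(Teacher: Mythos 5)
Your proposal is correct and shares the paper's overall architecture --- treat $\widehat\theta_{jk,{\rm INTR}}$ as a plug-in member of the SCISS family $\widecheck\theta_{jk}(\bfeta)$, transfer the expansion of Theorem \ref{thm:SSL} to the fixed limit $\overline\bfeta_{jk,{\rm INTR}}$, and read off (i) and (ii) from the argmin property (your specializations to $\bfeta_{\rm null}$ and to the MLE limit are exactly the paper's reasoning) --- but it handles the key nuisance-error step by a genuinely different route. The paper first proves $n^{\frac{1}{2}}(\widehat\bfeta_{\rm INTR}-\overline\bfeta_{\rm INTR})=O_p(1)$: since $\widehat\bfeta_{\rm INTR}$ is (assumed to be) the global minimizer, it solves the stationarity equation of $\widehat\sigma^2_{jk}(\bfeta)$, and Condition \ref{cond:intri} (positive definiteness of the population Hessian of the criterion) yields an M-estimation expansion and asymptotic normality of the nuisance; with this root-$n$ rate in hand, the cross term is killed exactly as the $B$-term in the proof of Theorem \ref{thm:SSL}, i.e.\ as $O_p(1)\cdot o_p(1)$. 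You instead never derive a rate for $\widehat\bfeta_{jk,{\rm INTR}}$: you use only its consistency, and compensate by showing that the $\bfeta$-gradient of $\widecheck\theta_{jk}(\bfeta)$ is itself $O_p(n^{-\frac{1}{2}})$ uniformly near the limit, exploiting the cancellation between the labeled and unlabeled empirical means. Each route buys something. Yours is more robust to the non-convexity the paper itself worries about: it needs $\widehat\bfeta_{jk,{\rm INTR}}$ only to be a consistent (near-)minimizer, not an interior stationary point with a well-behaved Hessian, and it uses Condition \ref{cond:intri} only for identification/well-separation; the price is the stochastic-equicontinuity (Donsker-type) argument you flag as the main obstacle, which the paper never states explicitly (though its own intermediate-point steps with $\bfeta^{*i}$ quietly require comparable uniformity). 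The paper's route delivers more --- an asymptotic-normality characterization of $\widehat\bfeta_{\rm INTR}$ itself --- at the cost of the stronger Hessian condition. A further point in your favor: in part (i) you argue equivalence through a.s.\ uniqueness of the $L_2$ projection of the score on $\bz$ rather than asserting $\overline\bfeta_{\rm INTR}=\overline\bfeta$ as the paper does, which correctly covers the case where the variance-minimizing parameter value is not unique even though the fitted conditional mean is.
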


Theorem \ref{thm:intri} establishes the optimality of $\widehat\theta_{jk,{\rm INTR}}$ under certain specification of the model for $\by|\bz$ that is potentially wrong. This also implies an improvement of efficiency by using $\widehat\theta_{jk,{\rm INTR}}$ over the SL and original SCISS estimators. However, it is important to note that solving for $\widehat\bfeta_{jk,{\rm INTR}}$ in (\ref{equ:intrinsic:eta}) is a non-convex problem and one may encounter local minima issues impeding $\widehat\theta_{jk,{\rm INTR}}$ to achieve its optimality in practice. In response to this, we introduce a safe optimization strategy on (\ref{equ:intrinsic:eta}) that initializes from the MLE of $\bfeta$ introduced in Section \ref{sec:Aug} or \ref{sec:PoS}, then iterates on updating $\bfeta$ via the Newton-Raphson method, and stops the iteration whenever converging or the objective function $\widehat\sigma_{jk}^2(\bfeta)$ does not decrease anymore. We provide the implementation details in Appendix \ref{sec:procedure_intri} and a simulation study of $\widehat\theta_{jk,{\rm INTR}}$ in Appendix \ref{sec:intrinsic simulation}. 


\subsection{Ensemble estimation}\label{sec:ensemble}

In addition to intrinsic efficient estimation, we shall introduce an ensemble estimation (ES) strategy based on the optimal allocation, which is easier to implement than INTR but practically performs well in power enhancement under potentially misspecified models for $\by|\bz$. The key idea is that an appropriately weighted convex combination of several asymptotically unbiased and (jointly) normal estimators is still unbiased and can have no larger variance than each of them. In specific, we consider the component-wise ensemble of $\widehat\theta_{jk,{\rm SCISS}}$ and $\widehat\theta_{jk,{\rm SL}}$ formulated as
\[
    \widehat\theta_{jk,{\rm ES}}=\widehat\alpha_{jk}\widehat\theta_{jk,{\rm SCISS}}+(1-\widehat\alpha_{jk})\widehat\theta_{jk,{\rm SL}},
\]
where $\widehat\alpha_{jk}\in[0,1]$ is an allocation parameter determined by us from the data, to minimize the variance of $\widehat\theta_{jk,{\rm ES}}$. For this purpose, we take $\widehat\alpha_{jk}$ as the first component of the vector $\mathbf{1}\trans{\widehat\Lambda_{jk}}^{-1}/(\mathbf{1}\trans{\widehat\Lambda_{jk}}^{-1}\mathbf{1})$ and $\widehat\Lambda_{jk}$ is the moment estimator of the covariance of the score functions of $\widehat\theta_{jk,{\rm SL}}$ and $\widehat\theta_{jk,{\rm SCISS}}$ respectively given in Lemma \ref{lemma:SL} and Theorem \ref{thm:SSL}, which is actually proportional to the asymptotic covariance of  $\widehat\theta_{jk,{\rm SL}}$ and $\widehat\theta_{jk,{\rm SCISS}}$. In this way, $\widehat\theta_{jk,{\rm ES}}$ will be guaranteed a variance smaller than or equal to both $\widehat\theta_{jk,{\rm SCISS}}$ and $\widehat\theta_{jk,{\rm SL}}$ since they correspond respectively to $\widehat\alpha_{jk}=1$ and $\widehat\alpha_{jk}=0$ in our ES construction. Furthermore, we can also construct the ES estimation as 
\[
\widehat\theta_{jk,{\rm ES}}=\widehat\alpha_{jk, {\rm SL}}\widehat\theta_{jk, {\rm SL}}+\widehat\alpha_{jk, {\rm Aug}}\widehat\theta_{jk, {\rm SCISS}}^{\rm Aug}+\widehat\alpha_{jk, {\rm PoS}}\widehat\theta_{jk, {\rm SCISS}}^{\rm PoS},
\]
where $(\widehat\alpha_{jk, {\rm SL}},\widehat\alpha_{jk, {\rm Aug}},\widehat\alpha_{jk, {\rm PoS}})$ are again the optimal allocation obtained through variance minimization subject to $\widehat\alpha_{\rm SL}+\widehat\alpha_{\rm Aug}+\widehat\alpha_{\rm PoS}=1$ and $\widehat\alpha_{\rm SL},\widehat\alpha_{\rm Aug},\widehat\alpha_{\rm PoS}\in [0,1]$. Here, $\widehat\theta_{jk, {\rm SCISS}}^{\rm Aug}$ and $\widehat\theta_{jk, {\rm SCISS}}^{\rm PoS}$ are SCISS estimators obtained with the methods in Section \ref{sec:Aug} and Section \ref{sec:PoS} respectively. The ES estimator derived in this way can achieve the (semiparametric) efficiency of the SCISS estimator with the correct model for $\by|\bz$ when either the augmented Ising model in \ref{sec:Aug} or the post-hoc surrogate model in Section \ref{sec:PoS} is correct.



%

\section{Simulation Studies}\label{sec:simulation}

We conducted comprehensive simulation studies to evaluate the performance of the proposed SCISS estimators and compare it with existing methods. Throughout, we set $N=10000$, $n=200$, $\bw=1$, and generate $q=3$ dimensional outcomes $\by$ from the Ising model (\ref{equ:Ising model}) with parameters
\[
\btheta=\begin{bmatrix}0.1&0.3&-0.6\\0.3&-0.3&0.4\\-0.6&0.4&0.2\end{bmatrix}.
\]
For the auxiliary features $\bx$, we set their dimension as $p=3$ and generate them based on $\by$ from two mechanisms including: (\romone) Gaussian model $x_k=c_{kk}y_k+\sum_{j\neq k}c_{jk}y_jy_k+\epsilon$ with $\epsilon\sim N(0,1)$; and (\romtwo) Poisson model $x_k\sim {\rm Poi}(c_{kk}y_k+\sum_{j\neq k}c_{jk}y_jy_k)$ for $k=1,\ldots,q$, where ${\rm Poi}(\lambda)$ refers to Poisson distribution with a rate $\lambda$. For the coefficient matrix $\c=(c_{jk})_{q\times q}$, we consider three choices:
\[
(1)~\c_1=\begin{bmatrix}3~~&0~~&0\\0~~&3~~&0\\0~~&0~~&3\end{bmatrix};\quad(2)~\c_2=\begin{bmatrix}2.5&0.2&0.5\\0.2&2.5&0.5\\0.5&0.5&2.5\end{bmatrix};\quad(3)~\c_3=\begin{bmatrix}1.5&1&1.5\\1&2&1\\1.5&1&1.5\end{bmatrix}.
\]
They represent the scenarios with null, moderate, and strong effects of the interaction terms $y_jy_k$ ($j\neq k$) encoded by each $x_j$ or $x_k$ respectively. From (1) to (3), the information in $x_j$ about the interactive modes among $\by$ gets intensified. When $\c$ is taken as $\c_1$ in setting (1), our (conditional) independence assumption for the post-hoc surrogate in Section \ref{sec:PoS} will strictly hold for $\bx$. This model assumption will be violated moderately for $\c_2$ and severely for $\c_3$. Meanwhile, compared to $\c_1$, each single $x_j$ generated under $\c_2$ and $\c_3$ will be more informative to the network structure of $\by$, which leads to a better fit on the augmented Ising model for $\by\sim\bz$ introduced in Section \ref{sec:Aug}.


Throughout, we use SCISS-Aug to denote the implementation of our method with conditional probability model proposed in Section \ref{sec:Aug} and SCISS-PoS for that in Section \ref{sec:PoS}. To compare our method with existing SL and SSL methods, we also include and study the SL estimator and the density ratio weighting method (DR) proposed by \cite{kawakita2014safe} in all settings. The implementation details of DR is provided in Appendix \ref{sec:dr}. Note that existing statistical SSL methods besides DR could not be directly used on graphical models. When implementing DR and SCISS-Aug for Poisson $\bx$, we take a $\log(x+1)$ transformation for preprocessing to make it less skewed. Table 1 presented below summarizes the estimation performance on the off-diagonal entries (pairwise conditional relationships) of $\btheta$, measured by bias, standard error (SE), relative efficiency (RE) to the SL estimator, as well as the coverage probability of our method in 95\% confidence interval (CI) estimation. Complete results including diagonal entries are presented in Appendix \ref{sec:complete simulation result}.

\begin{table}[!htbp]
\caption{Bias, standard error (SE), relative efficiency (RE) to SL, and coverage probability (CP) of 95\% confidence interval (CI) of the estimators for $\theta_{12}$, $\theta_{13}$, and $\theta_{23}$, under the settings introduced in Section \ref{sec:simulation}. The results are created based onn 500 replications for each configuration. The bold parts stand for the highest estimation efficiency among different methods.}
\small
\center
\begin{tabular}{|c|cc|ccc|cccc|cccc|}
\hline
{}&\multicolumn{2}{c|}{SL}& \multicolumn{3}{c|}{DR}&\multicolumn{4}{c|}{SCISS-Aug}&\multicolumn{4}{c|}{SCISS-PoS}\\\hline
{}&Bias&SE&Bias&SE&RE&Bias&SE&RE&CP&Bias&SE&RE&CP\\ \hline\hline
\multicolumn{14}{|c|}{(I) Gaussian $\bx$ \& (1) Null interaction $\c=\c_1$.}\\
\hline\hline
{$\theta_{12}$}& 0.002 & 0.294 & 0.002 & 0.295 & 1.00 & -0.005 & 0.242 & 1.49 & 0.96 & -0.014 & \pmb{0.208} & \pmb{2.01} & 0.95\\
{$\theta_{13}$}& -0.013 & 0.290 & -0.014 & 0.291 & 0.99 & -0.005 & 0.276 & 1.10 & 0.95 & 0.031 & \pmb{0.219} & \pmb{1.74} & 0.94\\
{$\theta_{23}$}& 0.003 & 0.289 & 0.002 & 0.291 & 0.99 & 0.007 & 0.288 & 1.01 & 0.96 & 0.006 & \pmb{0.212} & \pmb{1.86} & 0.95\\\hline\hline
\multicolumn{14}{|c|}{(I) Gaussian $\bx$ \& (2) Moderate interaction $\c=\c_2$.}\\\hline\hline
$\theta_{12}$& 0.002 & 0.294 & 0.003 & 0.295 & 1.00 & -0.003 & \pmb{0.236} & \pmb{1.55} & 0.95 & -0.013 & 0.237 & 1.55 & 0.95\\
$\theta_{13}$& -0.013 & 0.290 & -0.015 & 0.289 & 1.00 & -0.004 & 0.243 & 1.42 & 0.95 & 0.033 & \pmb{0.233} & \pmb{1.55} & 0.95\\
$\theta_{23}$& 0.003 & 0.289 & 0.002 & 0.291 & 0.99 & 0.009 & 0.262 & 1.22 & 0.96 & 0.004 & \pmb{0.223} & \pmb{1.68} & 0.95\\\hline\hline
\multicolumn{14}{|c|}{(I) Gaussian $\bx$ \& (3) Strong interaction $\c=\c_3$.}\\\hline\hline
$\theta_{12}$& 0.002 & 0.294 & 0.006 & 0.292 & 1.02 & 0.003 & \pmb{0.225} & \pmb{1.71} & 0.94 & -0.021 & 0.270 & 1.19 & 0.94\\
$\theta_{13}$& -0.013 & 0.290 & -0.015 & 0.280 & 1.07 & 0.001 & \pmb{0.223} & \pmb{1.68} & 0.95 & 0.032 & 0.271 & 1.15 & 0.94 \\
$\theta_{23}$& 0.003 & 0.289 & 0.003 & 0.288 & 1.01 & 0.010 & \pmb{0.235} & \pmb{1.51} & 0.96 & -0.001 & 0.269 & 1.16 & 0.94\\\hline\hline
\multicolumn{14}{|c|}{(II) Poisson $\bx$ \& (1) Null interaction $\c=\c_1$.}\\\hline\hline
$\theta_{12}$& 0.002 & 0.294 & 0.002 & 0.293 & 1.01 & 0.004 & 0.228 & 1.67 & 0.97 & 0.003 & \pmb{0.129} & \pmb{5.18} & 0.94\\
$\theta_{13}$& -0.013 & 0.290 & -0.014 & 0.291 & 0.99 & -0.007 & 0.279 & 1.08 & 0.96 & 0.005 & \pmb{0.134} & \pmb{4.66} & 0.93\\
$\theta_{23}$& 0.003 & 0.289 & 0.004 & 0.289 & 1.00 & 0.004 & 0.287 & 1.01 & 0.96 & -0.005 & \pmb{0.127}& \pmb{5.21} & 0.95\\\hline\hline
\multicolumn{14}{|c|}{(II) Poisson $\bx$ \& (2) Moderate interaction $\c=\c_2$.}\\\hline\hline
$\theta_{12}$& 0.002 & 0.294 & 0.005 & 0.296 & 0.99 & 0.003 & 0.222 & 1.76 & 0.96 & 0.003 & \pmb{0.141} & \pmb{4.38} & 0.94\\
$\theta_{13}$& -0.013 & 0.290 & -0.015 & 0.290 & 1.00 & -0.005 & 0.254 & 1.30 & 0.95 & 0.005 & \pmb{0.132} & \pmb{4.79} & 0.96\\
$\theta_{23}$& 0.003 & 0.289 & 0.004 & 0.290 & 1.00 & 0.011 & 0.280 & 1.07 & 0.97 & -0.010 & \pmb{0.144} & \pmb{4.04} & 0.94\\\hline\hline
\multicolumn{14}{|c|}{(II) Poisson $\bx$ \& (3) Strong interaction $\c=\c_3$.}\\\hline\hline
$\theta_{12}$& 0.002 & 0.294 & 0.007 & 0.296 & 0.99 & 0.005 & 0.209 & 1.99 & 0.96 & 0.006 & \pmb{0.173} &\pmb{2.89} & 0.94\\
$\theta_{13}$& -0.013 & 0.290 & -0.015 & 0.284 & 1.04 & -0.004 & 0.226 & 1.65 & 0.95 & 0.017 & \pmb{0.176} & \pmb{2.70} & 0.96\\
$\theta_{23}$& 0.003 & 0.289 & 0.006 & 0.289 & 1.00 & 0.010 & 0.254 & 1.30 & 0.96 & -0.012 & \pmb{0.187} & \pmb{2.39} & 0.92\\\hline
\end{tabular}
\end{table}

Across all configurations, the biases of all methods are much smaller than their SEs, and the two SCISS methods attain CPs close to the nominal level 95\%, which indicates the validity of our methods. Moreover, both versions of SCISS attain significantly better efficiency compared to SL and DR for all parameters and configurations, as the RE to SL of them is always above $1$ and higher than that of DR. For example, SCISS-PoS holds the RE higher than $1.5$ in five out of the six configurations.

Interestingly, under the Gaussian $\bx$ mechanism (\romone), the superiority gradually moves from SCISS-PoS to SCISS-Aug with the alternation of $\c$ from $\c_1$ to $\c_3$. For example, the RE (to SL) of SCISS-PoS on $\theta_{12}$ is around $0.5$ larger than that of SCISS-Aug under the setup (1) with $\c=\c_1$, equal to that of SCISS-Aug under (2) $\c=\c_2$, and $0.5$ smaller than SCISS-Aug under (3) $\c=\c_3$. When $\c=\c_1$, we have $x_j\perp \by_{\setminus j}| \{y_j,\bw\}$ and, thus, SCISS-PoS will correctly characterize $\by|\bz$ and achieves semiparametric efficiency. For the setup of $\c=\c_2$, small interactions of $\by$ on $\bz$ emerge so the generated data deviate moderately from the model assumption on SCISS-PoS introduced in Section \ref{sec:PoS}. Meanwhile, these interaction effects can be captured more effectively by the augmented Ising model for $\by|\bz$ in Section \ref{sec:Aug}, though its model (\ref{equ:Augmented-Ising}) may still be misspecified. This makes the relative performance of SCISS-Aug to SCISS-PoS better than the scenario with $\c=\c_1$. Furthermore, $\c_3$ encodes a stronger interaction effects in $\bz\mid \by$, which makes the modeling strategy of SCISS-Aug more effective than SCISS-PoS and, thus, the former having essentially smaller variance than the latter. Under the Poisson mechanism (\romtwo), SCISS-PoS consistently performs better than SCISS-Aug. This may be because that when generating Poisson $\bx$ with $\by$, certain information is lost from this non-linearity and discrete process, which could make SCISS-PoS using Poisson model simply better. Nonetheless, the gap between two methods still become smaller when changing $\c$ from $\c_1$ to $\c_3$ and this trend is consistent with the Gaussian setting (\romone).  These results suggest that it is worthwhile considering the intrinsic relationship between the $\by$ and $\bx$ when choosing between SCISS-Aug and SCISS-PoS. After all, our data-driven ensemble strategy in Section \ref{sec:ensemble} could typically help on this.


To investigate the validity and effectiveness of our proposed intrinsic efficient approach in Section\ref{sec:intrinsic}, we carry out an additional simulation study described in Appendix \ref{sec:complete simulation result} with details. The simulation setup include a data generation procedure on which SCISS-Aug and SCISS-PoS are highly misspecified, and is connected with the anchor-positive surrogate variable studied in the literature of EHR  \citep{zhang2020maximum}. With further intrinsic efficient update on SCISS-Aug, the resulted estimator $\widehat\theta_{jk,{\rm ES}}$ shows small enough bias and an $1.25$ relative efficiency to the original SCISS-Aug estimator, which means its performance is moderately better than the latter. In most other configurations, we found that $\widehat\theta_{jk,{\rm ES}}$ has a close performance to the original SCISS.

\section{Example: EHR Study of MIMIC-III}\label{sec:realdata}

On MIMIC-III data set, we apply the proposed method to infer the graphical model of several ICU-related phenotypes including {\em depression}, {\em obesity}, {\em alcohol abuse}, and {\em psychiatric disorder}. The whole study cohort consists of $N_0=38578$ patients who have been admitted to ICU. Among them, $N_1=1963$ were frequently ($\geq 3$ times) sent to ICU, referred as the FREQ group here, while the other $N_2=36615$ subjects form the non-FREQ group. The binary gold standard labels $\by$ for the $q=4$ phenotypes have been obtained on small subsets of subjects in both groups via chart-reviewing in \cite{moseley2020phenotype}. We thank the authors of this paper for providing this valuable information. In FREQ group, $n_1=436$ subjects are labeled and in non-FREQ, $n_2=609$ are labeled. For the auxiliary features, we take $\bx=(x_1,\ldots,x_4)\trans$ with each $x_j\in\{0,1\}$ as the indicator of having a non-zero count of the main ICD code for each phenotype $y_j$.

Next, we apply methods including SCISS, SL, and DR to estimate the Ising model for $\by$. For our method, we first extract SL, SCISS-Aug, and SCISS-PoS, and then ensemble them to produce the final estimator using the strategy described in Section \ref{sec:ensemble}. All methods are implemented on the FREQ and non-FREQ groups separately. In Table \ref{tab:real}, we present the point estimate, the empirical SE, and the RE to SL of the three methods on the FREQ and non-FREQ data. Over both SL and DR, our method shows no worse performance on all components, and achieves higher efficiency than on almost all of them. For example, on the marginal effect ($\theta_{jj}$) of alcohol abuse from the FREQ group, SCISS attains 67\% higher efficiency than SL while DR only shows an improvement of 13\%. For the conditional dependence effect between depression and obesity on non-FREQ, SCISS is 22\% more efficient than SL while DR and SL have nearly the same SEs. Note that in this real example, the empirical SEs of our method and DR will be always larger than or equal to that of the SL due to the use of the ES strategy in our method and the use of in-sample mean square errors to quantify the uncertainty of DR.

\begin{table}[!ht]
\caption{\label{tab:real} Point estimatie (Est), standard error (SE), and relative efficiency to SL (RE) of the SL, DR, and SCISS estimators on the FREQ and non-FREQ data sets in MIMIC-III. The bold part in the table stands for the highest RE among different methods. The diseases are abbreviated as DEPR (Depression), OB (Obesity), AA (Alcohol Abuse), PS (Psychiatric Disorder) respectively.}
\center
\begin{tabular}{|c|c|c|c|c|c|c|c|c|c|}\hline
{}&\multicolumn{2}{c|}{SL}& \multicolumn{3}{c|}{DR}&\multicolumn{3}{c|}{\rm SCISS}\\\hline 
{}&Est&SE&Est&SE&RE&Est&SE&RE\\\hline\hline
\multicolumn{9}{|c|}{FREQ}\\\hline\hline
{DEPR}& -1.35 & 0.14 & -1.37 & 0.14 & 1.05 & -1.32 & 0.13 & \pmb{1.20}\\
{DEPR $\leftrightarrow$ OB}& 0.40 & 0.24 & 0.55 & 0.23 & 1.10 & 0.58 & 0.22 & \pmb{1.16}\\
{DEPR $\leftrightarrow$ AA}& 0.24 & 0.21 & 0.44 & 0.21 & 1.00& 0.45 & 0.20 & \pmb{1.10}\\
{DEPR $\leftrightarrow$ PS}& 1.26 & 0.39 & 1.85 & 0.39 & 1.00 & 1.78 & 0.39 & \pmb{1.00}\\
{OB}& -2.78 & 0.25 & -2.19 & 0.21 & 1.42 & -2.24 & 0.20 & \pmb{1.58}\\
{OB $\leftrightarrow$ AA}& -0.78 & 0.22 & -1.21 & 0.21 & 1.04 & -1.30 & 0.19 & \pmb{1.25}\\
{OB $\leftrightarrow$ PS}& 0.62 & 0.25 & 0.42 & 0.24 & 1.03 & 0.55 & 0.23 & \pmb{1.16}\\
{AA}& -2.20 & 0.20 & -1.95 & 0.19 & 1.13 & -1.95 & 0.16 & \pmb{1.67}\\
{AA $\leftrightarrow$ PS}& 0.75 & 0.23 & 0.33 & 0.23 & 1.01 & 0.38 & 0.22 &\pmb{1.11}\\
{PS}& -2.05 & 0.27 & -2.36 & 0.25 & 1.17 & -2.42 & 0.24 & \pmb{1.18}\\
\hline\hline
\multicolumn{9}{|c|}{non-FREQ}\\\hline\hline
{DEPR}& -1.35 & 0.12 & -1.69 & 0.10 & 1.36 & -1.69 & 0.10 & \pmb{1.40}\\
{DEPR $\leftrightarrow$ OB}& 0.40 & 0.21 & 0.22 & 0.21 & 1.04& 0.44 & 0.19 & \pmb{1.22}\\
{DEPR $\leftrightarrow$ AA}& 0.24 & 0.19 & 0.18 & 0.19 & 1.01 & 0.37 & 0.18 & \pmb{1.09}\\
{DEPR $\leftrightarrow$ PS}& 1.26 & 0.26 & 1.56 & 0.25 & 1.02 & 1.42 & 0.25 & \pmb{1.04}\\
{OB}& -2.78 & 0.26 & -2.90 & 0.22 & 1.43 & -3.00 & 0.21 & \pmb{1.60}\\
{OB $\leftrightarrow$ AA}& -0.78 & 0.19 & -0.36 & 0.17 & 1.17& -0.65 & 0.17 & \pmb{1.19}\\
{OB $\leftrightarrow$ PS}& 0.62 & 0.22 & 0.82 & 0.22 & 1.06 & 0.76 & 0.21 & \pmb{1.07}\\
{AA}& -2.20 & 0.19 & -2.26 & 0.17 & 1.27 & -2.33 & 0.16 & \pmb{1.40}\\
{AA $\leftrightarrow$ PS}& 0.75 & 0.22 & 0.73 & 0.22 & 1.01 & 0.79 & 0.21 & \pmb{1.10}\\
{PS}& -2.05 & 0.18 & -2.35 & 0.16 & 1.31 & -2.39 & 0.15 & \pmb{1.45}\\\hline
\end{tabular}
\end{table}

In Figure \ref{fig:real data}, we present two diagrams sketching SCISS fitted graphs (including 95\% CI) of the four phenotypes on the FREQ and non-FREQ subjects separately. Depression and psychiatric disorder shows the most significant positive conditional dependence on both FREQ and non-FREQ, which is in line with existing literature in the application field. For example, in the study of \cite{1990Adult}, the depressed adolescent group showed an increased risk for psychiatric disorder in adult life as well as elevated risks of psychiatric hospitalization and treatment. The strongest negative dependence comes between alcohol abuse and obesity, which could be explained by the intensive metabolism after ethanol take-in. According to \cite{addolorato1998influence}, alcoholics show a significantly lower body weight and a significantly lower fat mass compared with their controls. Interestingly, the conditional dependence effects of alcohol abuse and obesity are significantly different between the FREQ and non-FREQ groups, as marked in red in Figure \ref{fig:real data}. The $p$-value of the two-sample test on this edge between the two groups is $0.006$, which is the only one being significant contrasting the two groups. This distinction may be because that patients frequently enrolling ICU are more likely to be severely bothered by alcohol abuse, which enhances its effect on the loss of body weight and reduces the risk of obesity.

\begin{figure}[htpb!]
    \caption{The point estimate and the 95\% confidence interval by SCISS of the Ising parameters for the FREQ and non-FREQ groups.}
    \centering
    \includegraphics{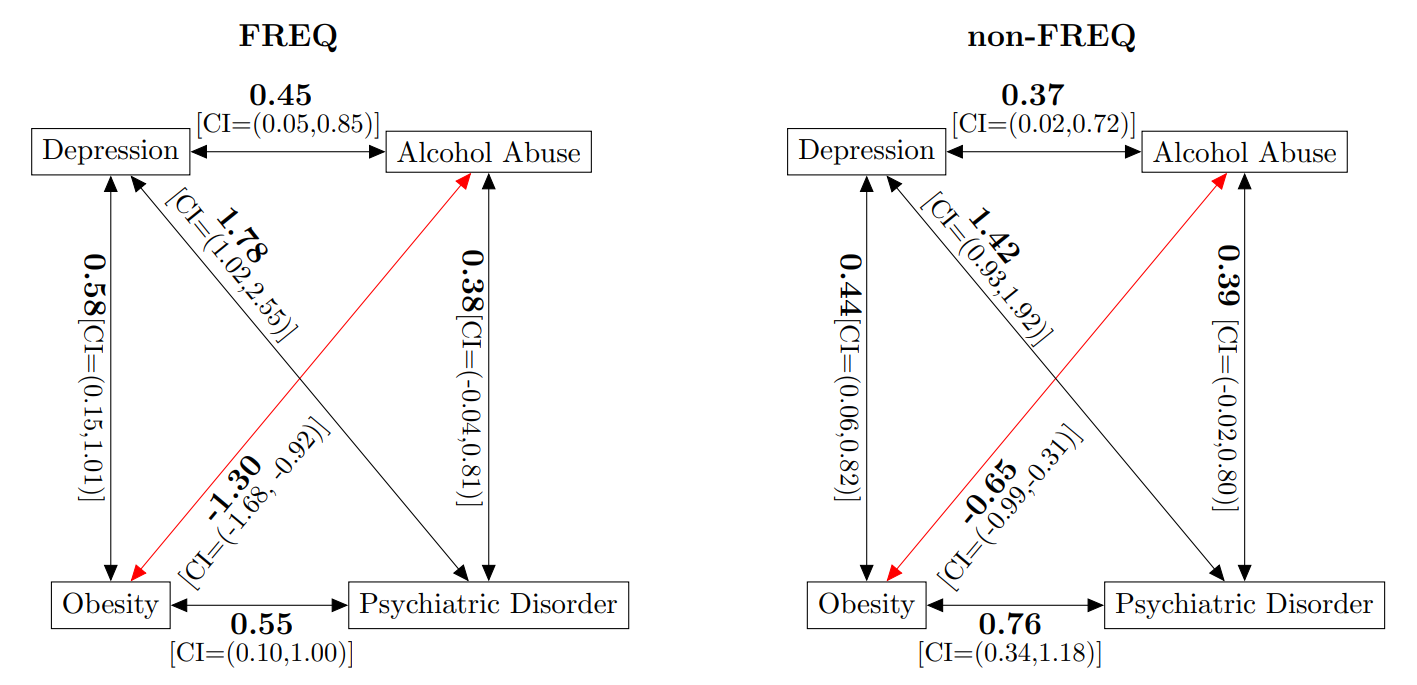}
    \label{fig:real data}
\end{figure}

\section{Discussion}

In this paper, we focus on the efficient estimation of Ising models under a common SSL setting with auxiliary features. We propose a novel SSL framework, under which the augmented Ising model (Aug) and post-hoc surrogate model (PoS) are introduced to leverage the useful information about $\by$ from the large sample of $\bx$. In addition, we develop two approaches including the intrinsic efficient SSL and the ensemble estimation, to enhance the performance of our estimator under potentially misspecified models for $\by\sim\bx$. Through theoretical studies, we demonstrate the advantage of the SCISS estimator over the SL estimator in terms of efficiency, as well as the effectiveness of our proposal in overcoming the misspecification issue. Our numerical studies indicate that both our SCISS-Aug and SCISS-PoS realizations outperform existing SL and SSL methods. 


We shall point out several potential directions to enable a more comprehensive and effective use of our framework. First, SCISS can be naturally extended to address other classes of exponential graph models with different types of outcomes $\by$, such as Gaussian graphical models for continuous $\by$ and Poisson graphs for counting outcomes. This is because these graphical models are also estimated through conditional regression like (\ref{equ:SL}), to which our score-imputation framework and specific modeling strategies can be generalized well. One can further consider the SSL problem of the semiparametric exponential graphs introduced by \cite{yang2018semiparametric}, with each $\theta_{jk}$ estimated by maximizing a nuisance-free pseudo-likelihood based on pairwise comparison between the subjects. In this case, robust and efficient imputation of their score functions formed as U-statistics is an open problem warranting future research. 
 
Second, it is appealing to utilize SCISS to address high-dimensional outcomes $\by$ and covariates $\bz$. In specific, our method can be incorporated with recent advancements in debiased inference \citep[e.g.]{zhang2014confidence,van2014asymptotically,javanmard2019false} for efficient SS learning and inference of high-dimensional graphs, enabling application to phenome-wide studies with a large set of diseases and more elements like genetic modification of the network. The idea is to induce the score function of the debiased estimator for each $\theta_{jk}$. Note that our construction in (\ref{equ:construction}) is automatically correcting the estimation error of $\widehat\m_j(\bz^i;\widehat\bfeta,\widehat\btheta_{j,{\rm SL}})$ that could be excessive in the high-dimensional regime. Nevertheless, the computation burden of extracting $\widehat\m_j(\bz^i;\widehat\bfeta,\widehat\btheta_{j,{\rm SL}})$ may be a major concern in the scenario. In addition, the group inference of a sub-network considered in \cite{xia2018multiple} requires aggregating the estimators of multiple $\theta_{jk}$'s, which will present a new challenge for our SSL framework.

\bibliographystyle{apalike}
\bibliography{SSLref} 

\begin{thebibliography}{}

\bibitem[Addolorato et~al., 1998]{addolorato1998influence}
Addolorato, Addolorato, Capristo, Greco, Stefanini, Stefanini, and Gasbarrini
  (1998).
\newblock Influence of chronic alcohol abuse on body weight and energy
  metabolism: is excess ethanol consumption a risk factor for obesity or
  malnutrition?
\newblock {\em Journal of internal medicine}, 244(5):387--395.

\bibitem[Athey et~al., 2020]{athey2020combining}
Athey, S., Chetty, R., and Imbens, G. (2020).
\newblock Combining experimental and observational data to estimate treatment
  effects on long term outcomes.
\newblock {\em arXiv preprint arXiv:2006.09676}.

\bibitem[Azriel et~al., 2022]{azriel2022semi}
Azriel, D., Brown, L.~D., Sklar, M., Berk, R., Buja, A., and Zhao, L. (2022).
\newblock Semi-supervised linear regression.
\newblock {\em Journal of the American Statistical Association},
  117(540):2238--2251.

\bibitem[Cai et~al., 2019]{cai2019differential}
Cai, T., Li, H., Ma, J., and Xia, Y. (2019).
\newblock Differential markov random field analysis with an application to
  detecting differential microbial community networks.
\newblock {\em Biometrika}, 106(2):401--416.

\bibitem[Chakrabortty and Cai, 2018]{chakrabortty2018efficient}
Chakrabortty, A. and Cai, T. (2018).
\newblock Efficient and adaptive linear regression in semi-supervised settings.
\newblock {\em The Annals of Statistics}, 46(4):1541--1572.

\bibitem[Chakrabortty et~al., 2022]{chakrabortty2022semi}
Chakrabortty, A., Dai, G., and Carroll, R.~J. (2022).
\newblock Semi-supervised quantile estimation: Robust and efficient inference
  in high dimensional settings.
\newblock {\em arXiv preprint arXiv:2201.10208}.

\bibitem[Cheng et~al., 2014]{cheng2014sparse}
Cheng, J., Levina, E., Wang, P., and Zhu, J. (2014).
\newblock A sparse ising model with covariates.
\newblock {\em Biometrics}, 70(4):943--953.

\bibitem[Del~Valle et~al., 2019]{del2019disease}
Del~Valle, E. P.~G., Garc{\'\i}a, G.~L., Santamar{\'\i}a, L.~P., Zanin, M.,
  Ruiz, E.~M., and Rodr{\'\i}guez-Gonz{\'a}lez, A. (2019).
\newblock Disease networks and their contribution to disease understanding: A
  review of their evolution, techniques and data sources.
\newblock {\em Journal of biomedical informatics}, 94:103206.

\bibitem[Gronsbell et~al., 2022]{gronsbell2022efficient}
Gronsbell, J., Liu, M., Tian, L., and Cai, T. (2022).
\newblock Efficient evaluation of prediction rules in semi-supervised settings
  under stratified sampling.
\newblock {\em Journal of the Royal Statistical Society: Series B: Statistical
  Methodology}.

\bibitem[Guo et~al., 2015]{guo2015estimating}
Guo, J., Cheng, J., Levina, E., Michailidis, G., and Zhu, J. (2015).
\newblock Estimating heterogeneous graphical models for discrete data with an
  application to roll call voting.
\newblock {\em The annals of applied statistics}, 9(2):821.

\bibitem[Halu et~al., 2019]{halu2019multiplex}
Halu, A., De~Domenico, M., Arenas, A., and Sharma, A. (2019).
\newblock The multiplex network of human diseases.
\newblock {\em NPJ systems biology and applications}, 5(1):15.

\bibitem[Harrington, 1990]{1990Adult}
Harrington, R.~C. (1990).
\newblock Adult outcomes of childhood and adolescent depression.
\newblock {\em Archives of General Psychiatry}, 47(5):465--473.

\bibitem[Hines et~al., 2022]{hines2022demystifying}
Hines, O., Dukes, O., Diaz-Ordaz, K., and Vansteelandt, S. (2022).
\newblock Demystifying statistical learning based on efficient influence
  functions.
\newblock {\em The American Statistician}, 76(3):292--304.

\bibitem[Hong et~al., 2019]{hong2019semi}
Hong, C., Liao, K.~P., and Cai, T. (2019).
\newblock Semi-supervised validation of multiple surrogate outcomes with
  application to electronic medical records phenotyping.
\newblock {\em Biometrics}, 75(1):78--89.

\bibitem[Hong et~al., 2021]{hong2021clinical}
Hong, C., Rush, E., Liu, M., Zhou, D., Sun, J., Sonabend, A., Castro, V.~M.,
  Schubert, P., Panickan, V.~A., Cai, T., et~al. (2021).
\newblock Clinical knowledge extraction via sparse embedding regression (keser)
  with multi-center large scale electronic health record data.
\newblock {\em NPJ digital medicine}, 4(1):151.

\bibitem[Hou et~al., 2021]{hou2021efficient}
Hou, J., Mukherjee, R., and Cai, T. (2021).
\newblock Efficient and robust semi-supervised estimation of $\mbox{ATE}$ with
  partially annotated treatment and response.
\newblock {\em arXiv preprint arXiv:2110.12336}.

\bibitem[Huang et~al., 2018]{huang2018pie}
Huang, J., Duan, R., Hubbard, R.~A., Wu, Y., Moore, J.~H., Xu, H., and Chen, Y.
  (2018).
\newblock Pie: A prior knowledge guided integrated likelihood estimation method
  for bias reduction in association studies using electronic health records
  data.
\newblock {\em Journal of the American Medical Informatics Association},
  25(3):345--352.

\bibitem[Ising, 1924]{ising1924beitrag}
Ising, E. (1924).
\newblock {\em Beitrag zur theorie des ferro-und paramagnetismus}.
\newblock PhD thesis, Grefe \& Tiedemann Hamburg.

\bibitem[Javanmard et~al., 2019]{javanmard2019false}
Javanmard, A., Javadi, H., et~al. (2019).
\newblock False discovery rate control via debiased lasso.
\newblock {\em Electronic Journal of Statistics}, 13(1):1212--1253.

\bibitem[Kawakita and Kanamori, 2012]{2012Semi}
Kawakita, M. and Kanamori, T. (2012).
\newblock Semi-supervised learning with density-ratio estimation.
\newblock {\em arXiv e-prints}.

\bibitem[Kawakita and Kanamori, 2013]{kawakita2013semi}
Kawakita, M. and Kanamori, T. (2013).
\newblock Semi-supervised learning with density-ratio estimation.
\newblock {\em Machine learning}, 91(2):189--209.

\bibitem[Kawakita and Takeuchi, 2014]{kawakita2014safe}
Kawakita, M. and Takeuchi, J. (2014).
\newblock Safe semi-supervised learning based on weighted likelihood.
\newblock {\em Neural Networks}, 53:146--164.

\bibitem[Kindermann and Snell, 1980]{kindermann1980markov}
Kindermann, R. and Snell, J.~L. (1980).
\newblock {\em Markov random fields and their applications}, volume~1.
\newblock American Mathematical Society.

\bibitem[Kurtz et~al., 2015]{kurtz2015sparse}
Kurtz, Z.~D., M{\"u}ller, C.~L., Miraldi, E.~R., Littman, D.~R., Blaser, M.~J.,
  and Bonneau, R.~A. (2015).
\newblock Sparse and compositionally robust inference of microbial ecological
  networks.
\newblock {\em PLoS computational biology}, 11(5):e1004226.

\bibitem[Li et~al., 2022]{li2022graph}
Li, M.~M., Huang, K., and Zitnik, M. (2022).
\newblock Graph representation learning in biomedicine and healthcare.
\newblock {\em Nature Biomedical Engineering}, 6(12):1353--1369.

\bibitem[Moseley et~al., 2020]{moseley2020phenotype}
Moseley, E., Celi, L.~A., Wu, J., and Dernoncourt, F. (2020).
\newblock Phenotype annotations for patient notes in the mimic-iii database.
\newblock {\em PhysioNet}.

\bibitem[Robins et~al., 1992]{robins1992estimating}
Robins, J.~M., Mark, S.~D., and Newey, W.~K. (1992).
\newblock Estimating exposure effects by modelling the expectation of exposure
  conditional on confounders.
\newblock {\em Biometrics}, pages 479--495.

\bibitem[Rotnitzky et~al., 2012]{rotnitzky2012improved}
Rotnitzky, A., Lei, Q., Sued, M., and Robins, J.~M. (2012).
\newblock Improved double-robust estimation in missing data and causal
  inference models.
\newblock {\em Biometrika}, 99(2):439--456.

\bibitem[Tony~Cai and Guo, 2020]{tony2020semisupervised}
Tony~Cai, T. and Guo, Z. (2020).
\newblock Semisupervised inference for explained variance in high dimensional
  linear regression and its applications.
\newblock {\em Journal of the Royal Statistical Society Series B: Statistical
  Methodology}, 82(2):391--419.

\bibitem[Van~de Geer et~al., 2014]{van2014asymptotically}
Van~de Geer, S., B{\"u}hlmann, P., Ritov, Y., Dezeure, R., et~al. (2014).
\newblock On asymptotically optimal confidence regions and tests for
  high-dimensional models.
\newblock {\em The Annals of Statistics}, 42(3):1166--1202.

\bibitem[Van~der Vaart, 2000]{van2000asymptotic}
Van~der Vaart, A.~W. (2000).
\newblock {\em Asymptotic statistics}, volume~3.
\newblock Cambridge University Press.

\bibitem[Wang et~al., 2011]{wang2011learning}
Wang, P., Chao, D.~L., and Hsu, L. (2011).
\newblock Learning oncogenic pathways from binary genomic instability data.
\newblock {\em Biometrics}, 67(1):164--173.

\bibitem[Xia et~al., 2018]{xia2018multiple}
Xia, Y., Cai, T., and Cai, T.~T. (2018).
\newblock Multiple testing of submatrices of a precision matrix with
  applications to identification of between pathway interactions.
\newblock {\em Journal of the American Statistical Association},
  113(521):328--339.

\bibitem[Yang et~al., 2018]{yang2018semiparametric}
Yang, Z., Ning, Y., and Liu, H. (2018).
\newblock On semiparametric exponential family graphical models.
\newblock {\em The Journal of Machine Learning Research}, 19(1):2314--2372.

\bibitem[Zhang and Zhang, 2014]{zhang2014confidence}
Zhang, C.-H. and Zhang, S.~S. (2014).
\newblock Confidence intervals for low dimensional parameters in high
  dimensional linear models.
\newblock {\em Journal of the Royal Statistical Society: Series B (Statistical
  Methodology)}, 76(1):217--242.

\bibitem[Zhang et~al., 2020]{zhang2020maximum}
Zhang, L., Ding, X., Ma, Y., Muthu, N., Ajmal, I., Moore, J.~H., Herman, D.~S.,
  and Chen, J. (2020).
\newblock A maximum likelihood approach to electronic health record phenotyping
  using positive and unlabeled patients.
\newblock {\em Journal of the American Medical Informatics Association},
  27(1):119--126.

\bibitem[Zhang and Bradic, 2022]{zhang2022high}
Zhang, Y. and Bradic, J. (2022).
\newblock High-dimensional semi-supervised learning: in search of optimal
  inference of the mean.
\newblock {\em Biometrika}, 109(2):387--403.

\bibitem[Zhang et~al., 2019]{zhang2019high}
Zhang, Y., Cai, T., Yu, S., Cho, K., Hong, C., Sun, J., Huang, J., Ho, Y.-L.,
  Ananthakrishnan, A.~N., Xia, Z., et~al. (2019).
\newblock High-throughput phenotyping with electronic medical record data using
  a common semi-supervised approach (phecap).
\newblock {\em Nature protocols}, 14(12):3426--3444.

\bibitem[Zhang et~al., 2022]{zhang2022prior}
Zhang, Y., Liu, M., Neykov, M., and Cai, T. (2022).
\newblock Prior adaptive semi-supervised learning with application to ehr
  phenotyping.
\newblock {\em Journal of Machine Learning Research}, 23(83):1--25.

\end{thebibliography}
\newpage

\newpage
\appendix

\section*{Appendix}
\numberwithin{equation}{section}
\setcounter{lemma}{0}
\setcounter{cond}{0}
\setcounter{table}{0}
\setcounter{figure}{0}
\setcounter{remark}{0}
\setcounter{theorem}{0}

\renewcommand{\thelemma}{A.\arabic{lemma}}
\renewcommand{\thetheorem}{A.\arabic{theorem}}
\renewcommand{\thecond}{A.\arabic{cond}}
\renewcommand{\thefigure}{A.\arabic{figure}}
\renewcommand{\thetable}{A.\arabic{table}}
\renewcommand{\theremark}{A.\arabic{remark}}
Here we provide justifications for our main theoretical results. 
\section{Asymptotic Properties of $\widehat\btheta_{\rm SL}$, $\widehat\btheta_{\rm SCISS} $ and $\widehat\btheta_{\rm INTR}$}\label{sec:proof_SL}
In this section, we will prove Lemma \ref{lemma:SL} presented in Section \ref{sec:method:sup}, Theorem \ref{thm:SSL} and Theorem \ref{thm:intri} in Section \ref{sec:method}. We first introduce the following conditions, as they are commonly used regularity conditions for the continuously differentiability of variables and the positive definition of hessian matrix in M-estimation theory and are satisfied in broad applications. Note that we let $\Sigma_1\succ\Sigma_2$ if $\Sigma_1-\Sigma_2$ is positive definite and $\Sigma_1\succeq\Sigma_2$ if $\Sigma_1-\Sigma_2$ is positive semi-definite for any two symmetric matrices $\Sigma_1$ and $\Sigma_2$. 

\begin{cond}
  $\bz=\{\bx,\bw\}$ have compact support and the density function for $\bz$ is continuously differentiable in the continuous components of $\bz$.
\label{cond:1}
\end{cond}

\begin{cond} For $j=1,\ldots,q,$\\
{\rm(A)} The true hessian matrix for Ising estimating equation $\Sigma_{\overline\btheta_j}\succ\bzero,$\\ 
{\rm(B)} The true hessian matrix for estimating equation in Section \ref{sec:Aug}, ${\rm E}[\bpsi_{\setminus j,\bw}\bpsi_{\setminus j,\bw}\trans\dot g(\overline\bfeta_j \trans\bpsi_{\setminus j,\bw})]\succ\bzero,$\\
{\rm(C)} The true hessian matrix for estimating equation in Section \ref{sec:PoS}, {\rm (1)}$ {\rm E}[(\bw\trans,y_j)\trans(\bw\trans,y_j)]\succ\bzero,$ {\rm (2)} ${\rm E}[(\bw\trans,y_j)\trans(\bw\trans,y_j)(x_j-1)]\succ\bzero,$ {\rm (3)} ${\rm E}[(\bw\trans,y_j)\trans(\bw\trans,y_j)\exp\{(\bw\trans,y_j)\bbeta_j\}]\succ\bzero.$
\label{cond:2} 
\end{cond}

\begin{cond} For $j,k=1,\ldots,q,$ the hessian matrix for intrinsic estimating equation in Section \ref{sec:intrinsic},
\begin{equation*}
\begin{aligned}
      {\rm E}\bigg(&\frac{\partial(m_{jk}(\bz;\overline\bfeta_{\rm INTR},\overline\btheta_j))}{\partial \bfeta}\cdot\frac{\partial(m_{jk}(\bz;\overline\bfeta_{\rm INTR},\overline\btheta_j))}{\partial\bfeta\trans}\\
      &-\{s_{jk}(\by_{\bw};\overline\btheta_j)-m_{jk}(\bz;\overline\bfeta_{\rm INTR},\overline\btheta_j)\}\cdot\frac{\partial^2\{m_{jk}(\bz;\overline\bfeta_{\rm INTR},\overline\btheta_j)\}}{\partial\bfeta\partial\bfeta\trans}\\
      &+\frac{\partial(m_{kj}(\bz;\overline\bfeta_{\rm INTR},\overline\btheta_k))}{\partial \bfeta}\cdot\frac{\partial(m_{kj}(\bz;\overline\bfeta_{\rm INTR},\overline\btheta_k))}{\partial\bfeta\trans}\\
      &-\{s_{kj}(\by_{\bw};\overline\btheta_k)-m_{jk}(\bz;\overline\bfeta_{\rm INTR},\overline\btheta_j)\}\cdot\frac{\partial^2\{m_{kj}(\bz;\overline\bfeta_{\rm INTR},\overline\btheta_k)\}}{\partial\bfeta\partial\bfeta\trans}\bigg)\succ\bzero.
\end{aligned}
\end{equation*}\label{cond:intri}
\end{cond}

We start from the proof of Lemma \ref{lemma:SL}, i.e. the asymptotic property of $\widehat\btheta_{\rm SL}$. For the simplicity in proof, we denote the true realization of estimating equation (\ref{equ:SL}) as 
\begin{equation*}
    \mathbf{U}_j(\btheta_j)={\rm E} [\by_{\setminus j,\bw}\{y_j-g(\btheta_j\trans\by_{\setminus j,\bw})\}]=\bzero,{~}j=1,\ldots,q.
\end{equation*}

We begin by verifying that $\widecheck\btheta_{j,{\rm SL}}$ is consistent for $\overline\btheta_j.$ It suffices to show that (i) sup$_{\btheta_j\in \Theta_j}\Vert\mathbf{U}_{j,n}(\btheta_j)-\mathbf{U}_{j}(\btheta_j)\Vert_2=o_p(1)$ and 
(ii) inf$_{\Vert\btheta_j-\overline\btheta_j\Vert_2>\epsilon}\Vert\mathbf{U}_j(\btheta_j)\Vert_2>0$ for any $\epsilon>0$ from \cite{van2000asymptotic}. 
Since $\by \in \{0,1\}^q$ and $g(\cdot)$ is bounded, we have that 
$n^{-1}\sum_{i=1}^n\by^i_{\setminus j,\bw}\{y_j^i-g(\btheta_j\trans\by_{\setminus j,\bw}^i)\}$ converges to ${\rm E}[\by_{\setminus j,\bw}\{y_j-g(\btheta_j\trans\by_{\setminus j,\bw})\}]$ in probability uniformly as $n\xrightarrow{}\infty$ from the uniform law of large number (ULLN). It follows that sup$_{\btheta_j\in\Theta}\Vert\mathbf{U}_{j,n}(\btheta_j)-\mathbf{U}_{j}(\btheta_j)\Vert_2=o_p(1),$ then (i) is satisfied. Furthermore, (ii) follows directly from
Condition \ref{cond:2} (A) that $\mathbf U_j(\btheta_j)$ is monotonous to $\btheta_j$, then unique $\overline \btheta_j$ exists. Consequently, $\widecheck\btheta_{j,{\rm SL}}\xrightarrow{p}\overline\btheta_j$ as $n\xrightarrow{}\infty$.
Next we consider the asymptotic normality of $n^{\frac{1}{2}}(\widecheck\btheta_{j,{\rm SL}}-\overline\btheta_j).$ Noting that $\widecheck\btheta_{j,{\rm SL}}\xrightarrow{p}\overline\btheta_j,$ it is well-known that when $n\xrightarrow{}\infty$,
$$
n^{\frac{1}{2}}(\widecheck\btheta_{j,{\rm SL}}-\overline\btheta_j)=n^{-\frac{1}{2}}\sum_{i=1}^n\S_j(\by_{\bw}^i;\overline\btheta_j)+o_p(1).
$$
By symmetrization described in Section \ref{sec:method:sup}, we obtain the Taylor expansion for $\widehat\theta_{jk,{\rm SL}}$ as
$$
n^{\frac{1}{2}}(\widehat\theta_{jk,{\rm SL}}-\overline\theta_{jk})=\frac{1}{2}n^{-\frac{1}{2}}\sum_{i=1}^n\{s_{jk}(\by_{\bw}^i;\overline\btheta_j)+s_{kj}(\by_{\bw}^i;\overline\btheta_k)\}+o_p(1).
$$
Under Condition $\ref{cond:2}$ (A), we have finished the proof of Lemma \ref{lemma:SL} by the classical Central Limit Theorem that 
\[n^{\frac{1}{2}}(\widehat\theta_{jk,{\rm SL}}-\overline\theta_{jk})\xrightarrow{}N(0,\Omega_{jk,{\rm SL}}), {\rm ~where~} \Omega_{jk,{\rm SL}}=\frac{1}{4}{\rm E}\{s_{jk}(\by_{\bw};\overline\btheta_j)+s_{kj}(\by_{\bw};\overline\btheta_k)\}^2.\]

\section{Asymptotic Properties of $\widehat\btheta_{\rm SCISS}$}\label{sec:proof_SSL}
In this section, we will provide the proof for our main Theorem \ref{thm:SSL}.

We first present the asymptotic property of $\widehat\bfeta$ we use in imputing the score function and start from introducing the true value of $\bfeta$ in Section \ref{sec:Aug} and \ref{sec:PoS}. We now denote the true value of $\widehat\bfeta$ in Section \ref{sec:Aug} as $\overline\bfeta_{\rm A}$, which is the solution to estimating equation
\begin{equation*}
    \mathbf{Q}^{\rm Aug}_j(\bfeta_{j})
    ={\rm E} [\boldsymbol{\psi}_{\setminus j,\bw}\{y_j-g(\bfeta_{j}\trans\boldsymbol{\psi}_{\setminus j,\bw})\}]=\bzero, {~}j=1,\ldots,q.
\end{equation*}
    
Meanwhile, we denote the true value for $\bfeta$ of different assumption models in Section \ref{sec:PoS} as $\overline\bfeta_{{P-\rm I}}$, $\overline\bfeta_{{P-\rm II}}$ and $\overline\bfeta_{{P-\rm III}}$, as they are solutions to estimating equation

(I) $\mathbf{Q}^{PoS-{\rm I}}_j(\bfeta_j)={\rm E}[(\bw,y_j)\trans\{x_j-(\bw\trans,y_j)\bfeta_j\}]=\bzero$, 

(II) $\mathbf{Q}^{PoS-{\rm II}}_j(\bfeta_j)={\rm E}[(\bw,y_j)\trans\{x_j-{\rm expit}(\bw\trans,y_j)\bfeta_j\}]=\bzero$, 

(III) $\mathbf{Q}^{PoS-{\rm III}}_j(\bfeta_j)={\rm E}[(\bw,y_j)\trans\{x_j-{\rm exp}(\bw\trans,y_j)\bfeta_j\}]=\bzero,$ for $j=1,\ldots,q$, respectively.

Next, we will present the asymptotic property of the true value of $\bfeta$ above in Lemma \ref{lemma:eta}, as for simplicity, we unify the notation of them as $\overline\bfeta$ and will maintain the use of $\overline\bfeta$ throughout the proof of the Theorem \ref{thm:SSL}.

\begin{lemma}\label{lemma:eta}
Under Condition \ref{cond:1} and Condition \ref{cond:2} {\rm (B),(C)}, we have $n^{\frac{1}{2}}(\widehat\bfeta-\overline\bfeta)=O_p(1).$
\end{lemma}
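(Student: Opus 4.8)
The plan is to treat Lemma \ref{lemma:eta} as a standard Z-estimation result and to handle the four estimating equations in parallel, namely the augmented Ising equation $\mathbf{Q}^{\rm Aug}_j$ and the three post-hoc surrogate equations $\mathbf{Q}^{PoS-{\rm I}}_j$, $\mathbf{Q}^{PoS-{\rm II}}_j$, $\mathbf{Q}^{PoS-{\rm III}}_j$, since each shares the M-estimator structure of equation (\ref{equ:SL}) analyzed in Appendix \ref{sec:proof_SL}. Throughout I would write $\mathbf{Q}_{j,n}$ for the empirical estimating function, $\mathbf{Q}_j$ for its population counterpart, and $\overline\bfeta_j$ for the root of $\mathbf{Q}_j(\bfeta_j)=\bzero$, then recover the stacked statement by concatenating the per-$j$ expansions.

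First I would establish consistency, $\widehat\bfeta\xrightarrow{p}\overline\bfeta$, by verifying the two sufficient conditions of \cite{van2000asymptotic} exactly as in the proof of Lemma \ref{lemma:SL}: (i) uniform convergence $\sup_{\bfeta_j}\|\mathbf{Q}_{j,n}(\bfeta_j)-\mathbf{Q}_j(\bfeta_j)\|_2=o_p(1)$, and (ii) that $\overline\bfeta_j$ is a well-separated root. For (i) I would invoke the ULLN, which applies because $\bz$ has compact support (Condition \ref{cond:1}) and the link maps $\bfeta_j\mapsto g(\bfeta_j\trans\bpsi_{\setminus j,\bw})$, ${\rm expit}$, and $\exp$ are continuous and bounded on compacta; the penalty $\lambda_n\bfeta_j$ in $\mathbf{Q}^{\rm Aug}_{j,n}$ vanishes uniformly on compact sets since $\lambda_n\to 0$. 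For (ii) I would use that each population Hessian, namely ${\rm E}[\bpsi_{\setminus j,\bw}\bpsi_{\setminus j,\bw}\trans\dot g(\cdot)]$ for the augmented model and the three matrices listed in Condition \ref{cond:2}(C) for the surrogate models, is positive definite, so the relevant $\mathbf{Q}_j$ is strictly monotone in $\bfeta_j$ and admits the unique root $\overline\bfeta_j$.

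Next I would upgrade consistency to the $n^{1/2}$-rate via a one-term Taylor expansion of $\mathbf{Q}_{j,n}(\widehat\bfeta_j)=\bzero$ about $\overline\bfeta_j$, giving
\[
n^{\frac{1}{2}}(\widehat\bfeta_j-\overline\bfeta_j)=-\{\dot{\mathbf{Q}}_{j,n}(\bfeta_j^*)\}^{-1}\,n^{\frac{1}{2}}\mathbf{Q}_{j,n}(\overline\bfeta_j),
\]
for some $\bfeta_j^*$ between $\widehat\bfeta_j$ and $\overline\bfeta_j$. The factor $n^{\frac{1}{2}}\mathbf{Q}_{j,n}(\overline\bfeta_j)=O_p(1)$ by the CLT, the summands being mean-zero with finite variance and the penalty contributing only $n^{\frac{1}{2}}\lambda_n\overline\bfeta_j=o(1)$ since $\lambda_n=o(n^{-1/2})$; meanwhile $\dot{\mathbf{Q}}_{j,n}(\bfeta_j^*)\xrightarrow{p}\dot{\mathbf{Q}}_j(\overline\bfeta_j)$ by consistency of $\bfeta_j^*$ together with uniform convergence of the derivative, and the limit is invertible by Condition \ref{cond:2}(B),(C). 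Stacking the expansions over $j=1,\ldots,q$ then yields $n^{\frac{1}{2}}(\widehat\bfeta-\overline\bfeta)=O_p(1)$.

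The part demanding the most care is the Poisson surrogate case (III), where $x_j$ is unbounded so the boundedness argument that serves the other equations does not apply directly. Here I would exploit the fact that $x_j$ enters the estimating function only linearly while $(\bw\trans,y_j)$ is bounded, so finiteness of the first two moments of the Poisson surrogate suffices for both the ULLN in the consistency step and the CLT in the expansion step, and the mean function $\exp\{(\bw\trans,y_j)\bfeta_j\}$ stays bounded over the compact parameter space and the compact support of $(\bw,y_j)$. Verifying this integrability, and confirming that compactness of the parameter space may be assumed without loss, is the only genuinely model-specific checking required; everything else mirrors the routine Z-estimation argument already used for Lemma \ref{lemma:SL}.
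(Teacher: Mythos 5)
Your proposal is correct and follows essentially the same route as the paper's own proof: consistency via the ULLN plus a well-separated root guaranteed by the positive-definite Hessians in Condition \ref{cond:2} (B),(C), then a Taylor expansion of each estimating equation (with the $\lambda_n=o(n^{-\frac{1}{2}})$ penalty absorbed into the negligible term) giving asymptotic normality and hence $n^{\frac{1}{2}}(\widehat\bfeta-\overline\bfeta)=O_p(1)$, applied in parallel to the Aug and the three PoS equations. Your extra moment-based care for the Poisson case (III) is the only point of departure, and it is harmless but not strictly needed, since Condition \ref{cond:1} already assumes $\bz=\{\bx,\bw\}$ has compact support, so $x_j$ is bounded under the lemma's hypotheses.
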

\begin{proof}
~\\
For $\widehat\bfeta_A$, under Condition $\ref{cond:1}$ and Condition $\ref{cond:2}$ (B), considering that $\lambda_n=o(n^{-\frac{1}{2}})$, we can adapt the same procedure of proving Lemma \ref{lemma:SL} in Appendix \ref{sec:proof_SL} to show that $\widehat\bfeta_A\xrightarrow{p}\overline\bfeta_A$ as $n\xrightarrow{}\infty$. We then can obtain the following Taylor expansion 
$$
n^{\frac{1}{2}}(\widehat\bfeta_{A,j}-\overline\bfeta_{A,j})=n^{-\frac{1}{2}}\sum_{i=1}^n[{\rm E}\{\bpsi_{\setminus j,\bw}\bpsi_{\setminus j,\bw}\trans\dot g(\overline\bfeta_{A,j}\trans\bpsi_{\setminus j,\bw})\}]^{-1}\bpsi_{\setminus j,\bw}^i\{y_j^i-g(\overline\bfeta_{A,j}\trans\bpsi_{\setminus j,
\bw}^i)\}+o_p(1).
$$
It follows that $n^{\frac{1}{2}}(\widehat\bfeta_{A,j}-\overline\bfeta_{A,j})$ converges to a zero-mean Gaussian distribution thus $n^{\frac{1}{2}}(\widehat\bfeta_{A,j}-\overline\bfeta_{A,j})=O_p(1)$. Then we have $n^{\frac{1}{2}}(\widehat\bfeta_A-\overline\bfeta_A)=O_p(1).$

For $\widehat\bfeta_{P-{\rm I}}$, $\widehat\bfeta_{P-{\rm II}}$ and $\widehat\bfeta_{P-{\rm III}}$, under Condition \ref{cond:1} and Condition \ref{cond:2} (C), we also have that the hessian matrix for $\mathbf{Q}_j^{PoS-{\rm I}}(\bfeta_j),$ $\mathbf{Q}_j^{PoS-{\rm II}}(\bfeta_j){\rm ~and~}\mathbf{Q}_j^{PoS-{\rm \romthree}}(\bfeta_j)$ are positive. It then follows from the same procedure for $\widehat\bfeta_A$ that $n^{\frac{1}{2}}(\widehat\bfeta_{P}-\overline\bfeta_{P})=O_p(1),$ where $\widehat\bfeta_P$ including $\widehat\bfeta_{P-{\rm I}}$, $\widehat\bfeta_{P-{\rm II}}$ and $\widehat\bfeta_{P-{\rm III}}$.

Above all, we have $n^{\frac{1}{2}}(\widehat\bfeta-\overline\bfeta)=O_p(1)$.
\end{proof}

We begin to derive the asymptotic property of $\widehat\btheta_{\rm SCISS}$ now. Firstly, the construction of $\widehat\btheta_{\rm SCISS}$ yields the expansion for $n^{\frac{1}{2}}(\widehat\theta_{jk,\rm SCISS}-\widehat\btheta_{jk,\rm SCISS})$ as
\begin{equation}\label{equ:SCISS_proof_first}
\begin{aligned}
n^{\frac{1}{2}}(\widehat{\theta}_{jk,{\rm SCISS}}-\overline\theta_{jk})=&\frac{1}{2}n^{-\frac{1}{2}}\sum_{i=1}^n \{s_{jk}(\by_{\bw}^i;\overline\btheta_j)+s_{kj}(\by_{\bw}^i;\overline\btheta_k)\}-\frac{1}{2}n^{-\frac{1}{2}}\sum_{i=1}^{n}\{\widehat{m}_{jk}(\bz^i;\widehat\bfeta,\widehat\btheta_{j,{\rm SL}})+\widehat{m}_{kj}(\bz^i;\widehat\bfeta,\widehat\btheta_{k,{\rm SL}})\}\\
&+\frac{1}{2}n^{\frac{1}{2}}N^{-1}\sum_{i=n+1}^{n+N} \{\widehat{m}_{jk}(\bz^i;\widehat\bfeta,\widehat\btheta_{j,{\rm SL}})+\widehat{m}_{kj}(\bz^i;\widehat\bfeta,\widehat\btheta_{k,{\rm SL}})\}+o_p(1).
\end{aligned}
\end{equation}

Note that since $\widehat{m}_{jk}(\bz;\bfeta,\btheta_j)$ is differential to $\bfeta$ and $\widehat\bfeta\xrightarrow{p}\overline\bfeta,$ we can Taylor expand $\widehat{m}_{jk}(\bz;\widehat\bfeta,\widehat\btheta_j)$ at $\overline\bfeta$ and have
\begin{equation*}
  \widehat{m}_{jk}(\bz;\widehat\bfeta,\widehat\btheta_{j,{\rm SL}})=\widehat{m}_{jk}(\bz;\overline\bfeta,\widehat\btheta_{j,{\rm SL}})+(\widehat\bfeta-\overline\bfeta)\trans\frac{\partial \widehat{m}_{jk}(\bz;\bfeta^*,\widehat\btheta_{j,{\rm SL}})}{\partial\bfeta},
\end{equation*} 
where $\bfeta^*$ is a vector between $\widehat\bfeta$ and $\overline\bfeta$.

After obtaining the expansion of $\widehat m_{jk}$, we now are able to write equation (\ref{equ:SCISS_proof_first}) as
\begin{equation}
\begin{aligned}
    n^{\frac{1}{2}}(\widehat{\theta}_{jk,{\rm SCISS}}-\overline\theta_{jk})=&\frac{1}{2}n^{-\frac{1}{2}}\sum_{i=1}^n\{s_{jk}(\by_{\bw}^i;\overline\btheta_j)+s_{kj}(\by_{\bw}^i;\overline\btheta_k)\}\\
    &-\frac{1}{2}n^{-\frac{1}{2}}\sum_{i=1}^n\{\widehat{m}_{jk}(\bz^i;\overline\bfeta,\widehat\btheta_{j,{\rm SL}})+(\widehat\bfeta-\overline\bfeta)\trans\frac{\partial \widehat{m}_{jk}(\bz^i;\bfeta^{*1},\widehat\btheta_{j,{\rm SL}})}{\partial \bfeta}\\
    &+\widehat{m}_{kj}(\bz^i;\overline\bfeta,\widehat\btheta_{k,{\rm SL}})+(\widehat\bfeta-\overline\bfeta)\trans\frac{\partial \widehat{m}_{kj}(\bz^i;\bfeta^{*2},\widehat\btheta_{k,{\rm SL}})}{\partial \bfeta}\}\\
    &+\frac{1}{2}n^{\frac{1}{2}}N^{-1}\sum_{i=n+1}^{n+N}\{\widehat{m}_{jk}(\bz^i;\overline\bfeta,\widehat\btheta_{j,{\rm SL}})+(\widehat\bfeta-\overline\bfeta)\trans\frac{\partial \widehat{m}_{jk}(\bz^i;\bfeta^{*3},\widehat\btheta_{j,{\rm SL}})}{\partial \bfeta}\\
    &+\widehat{m}_{kj}(\bz^i;\overline\bfeta,\widehat\btheta_{k,{\rm SL}})+(\widehat\bfeta-\overline\bfeta)\trans\frac{\partial \widehat{m}_{kj}(\bz^i;\bfeta^{*4},\widehat\btheta_{k,{\rm SL}})}{\partial \bfeta}\}+o_p(1),
\label{equ:expansion_1}
\end{aligned}
\end{equation}
where $\bfeta^{*i},i=1,2,3,4$ are vectors between $\widehat\bfeta$ and $\overline\bfeta$.

For the clarity in further analysis, we split the expansion (\ref{equ:expansion_1}) into two parts, as 
\begin{equation*}
\begin{aligned}
    n^{\frac{1}{2}}(\widehat{\theta}_{jk,{\rm SCISS}}-\overline\theta_{jk})=A-B+o_p(1),
    \label{SSL_expansion}
\end{aligned}
\end{equation*}
where 
\begin{equation*}
\begin{aligned}
A=&\frac{1}{2}n^{-\frac{1}{2}}\sum_{i=1}^n\{s_{jk}(\by_{\bw}^i;\overline\btheta_j)+s_{kj}(\by_{\bw}^i;\overline\btheta_k)-\widehat{m}_{jk}(\bz^i;\overline\bfeta,\widehat\btheta_{j,{\rm SL}})-\widehat{m}_{kj}(\bz^i;\overline\bfeta,\widehat\btheta_{k,SL})\}\\
&+\frac{1}{2}n^{\frac{1}{2}}N^{-1}\sum_{i=n+1}^{n+N}\{\widehat{m}_{jk}(\bz^i;\overline\bfeta,\widehat\btheta_{j,{\rm SL}})+\widehat{m}_{kj}(\bz^i;\overline\bfeta,\widehat\btheta_{k,SL})\}   
\end{aligned}
\label{equ:A}
\end{equation*}    
and
\begin{equation*}
\begin{aligned}
B=&(\widehat\bfeta-\overline\bfeta)\trans[\frac{1}{2}n^{-\frac{1}{2}}\sum_{i=1}^n\{\frac{\partial \widehat{m}_{jk}(\bz^i;\bfeta^{*1},\widehat\btheta_{j,{\rm SL}})}{\partial \bfeta}+\frac{\partial \widehat{m}_{kj}(\bz^i;\bfeta^{*2},\widehat\btheta_{k,SL})}{\partial \bfeta}\}\\
&-\frac{1}{2}n^{\frac{1}{2}}N^{-1}\sum_{i=n+1}^{n+N}\{\frac{\partial \widehat{m}_{jk}(\bz^i;\bfeta^{*3},\widehat\btheta_{j,{\rm SL}})}{\partial \bfeta}+\frac{\partial \widehat{m}_{kj}(\bz^i;\bfeta^{*4},\widehat\btheta_{k,SL})}{\partial \bfeta}\}].
\end{aligned}
\label{equ:B}
\end{equation*}

We first analysis the property of A.

Since $\widehat{m}_{jk}(\bz;\bfeta,\btheta_j)$ is differential to $\btheta_j$ and $\widehat\btheta_{j,{\rm SL}}\xrightarrow{p}\overline\btheta_j$, we can Taylor expand $\widehat{m}_{jk}(\bz;\overline\bfeta,\widehat\btheta_{j,{\rm SL}})$ at $\overline\btheta_j$ as
\begin{equation}
\begin{aligned}
    \widehat {m}_{jk}(\bz;\overline\bfeta,\widehat\btheta_{j,{\rm SL}})=\widehat{m}_{jk}(\bz;\overline\bfeta,\overline\btheta_j)+(\widehat\btheta_{j,{\rm SL}}-\overline\btheta_j)\trans\frac{\partial \widehat{m}_{jk}(\bz;\overline\bfeta,\btheta_j^*)}{\partial \btheta_j},
    \label{equ:expand_theta}
\end{aligned}
\end{equation}
where $\btheta_j^*$ is a vector between $\widehat\btheta_{j,{\rm SL}}$ and thus $\overline\btheta_j$ and $\btheta_j^*\xrightarrow{p}\overline\btheta_j$ as well. 

To further analysis $\widehat m_{jk}(\bz;\overline\bfeta,\widehat\btheta_{j,{\rm SL}})$ and $\partial \widehat m_{jk}(\bz;\overline\bfeta,\btheta_j^*)/\partial \btheta_j$, we recall the estimate of hessian matrix $\widehat\Sigma_{\btheta_j}$ in the construction of $\widehat m_{jk}$. Since $\btheta_j^*\xrightarrow{p}\overline\btheta_j$, we have 
\begin{equation*}
    \widehat\Sigma_{\btheta_j^*}=n^{-1}\sum_{i=1}^n[\by_{\setminus j,\bw}^i(\by_{\setminus j,\bw}^i)\trans \dot g\{(\btheta_j^*)\trans \by_{\setminus j,\bw}^i\}]=n^{-1}\sum_{i=1}^n \{\by_{\setminus j,\bw}^i(\by_{\setminus j,\bw}^i)\trans \dot g(\overline\btheta_j\trans \by_{\setminus j,\bw}^i)+o_p(1)\}=\Sigma_{\overline\btheta_j}+o_p(1)
\end{equation*}
by the continuous mapping theorem and the weak law of large number when $n\xrightarrow{}\infty$.
Then from $\btheta_j^*\xrightarrow{p}\overline\btheta_j$ and $n^{\frac{1}{2}}(\widehat\btheta_{j,{\rm SL}}-\overline\btheta_j)=O_p(1)$ proofed in Appendix \ref{sec:proof_SL}, we have the following expansion of $\widehat m_{jk}(\bz;\overline\bfeta,\widehat\btheta_{j,{\rm SL}})$ from (\ref{equ:expand_theta}) as  
\begin{equation*}
\begin{aligned}
\widehat {m}_{jk}(\bz;\overline\bfeta,\widehat\btheta_{j,{\rm SL}})=&\widehat m_{jk}(\bz;\overline\bfeta,\overline\btheta_j)+(\widehat\btheta_{j,{\rm SL}}-\overline\btheta_j)\trans\{\frac{\partial m_{jk}(\bz;\overline\bfeta,\overline\btheta_j)}{\partial \btheta_j}+o_p(1)\}\\
=&m_{jk}(\bz;\overline\bfeta,\overline\btheta_j)+(\widehat\btheta_{j,{\rm SL}}-\overline\btheta_j)\trans\frac{\partial m_{jk}(\bz;\overline\bfeta,\overline\btheta_j)}{\partial \btheta_j}+o_p(n^{-\frac{1}{2}}),
\end{aligned}
\end{equation*}

Then we have 
\begin{equation}
    \begin{aligned}
A=&\frac{1}{2}n^{-\frac{1}{2}}\sum_{i=1}^n[s_{jk}(\by_{\bw}^i;\overline\btheta_j)+s_{kj}(\by_{\bw}^i;\overline\btheta_k)-\{m_{jk}(\bz^i;\overline\bfeta,\overline\btheta_j)+(\widehat\btheta_{j,{\rm SL}}-\overline\btheta_j)\trans\frac{\partial {m}_{jk}(\bz^i;\overline\bfeta,\overline\btheta_j)}{\partial \btheta_j}\\
&+m_{kj}(\bz^i;\overline\bfeta,\overline\btheta_k)+(\widehat\btheta_{k,SL}-\overline\btheta_k)\trans\frac{\partial {m}_{kj}(\bz^i;\overline\bfeta,\overline\btheta_k)}{\partial \btheta_k}\}+o_p(n^{-\frac{1}{2}})]\\
&+\frac{1}{2}n^{\frac{1}{2}}N^{-1}\sum_{i=n+1}^{n+N}\{m_{jk}(\bz^i;\overline\bfeta,\overline\btheta_j)+(\widehat\btheta_{j,{\rm SL}}-\overline\btheta_j)\trans\frac{\partial {m}_{jk}(\bz^i;\overline\bfeta,\overline\btheta_j)}{\partial \btheta_j}\\
&+m_{kj}(\bz^i;\overline\bfeta,\overline\btheta_k)+(\widehat\btheta_{k,SL}-\overline\btheta_k)\trans\frac{\partial {m}_{kj}(\bz^i;\overline\bfeta,\overline\btheta_k)}{\partial \btheta_k}+o_p(n^{-\frac{1}{2}})\},
\end{aligned}
\end{equation}
which can be written as 
\begin{equation*}
A=A_1-A_2,
\end{equation*}
where 
\begin{equation*}
\begin{aligned}
  A_1=&\frac{1}{2}n^{-\frac{1}{2}}\sum_{i=1}^n\{s_{jk}(\by_{\bw}^i;\overline\btheta_j)+s_{kj}(\by_{\bw}^i;\overline\btheta_k)-m_{jk}(\bz^i;\overline\bfeta,\overline\btheta_j)-m_{kj}(\bz^i;\overline\bfeta,\overline\btheta_k)\}\\
    &+\frac{1}{2}n^{\frac{1}{2}}N^{-1}\sum_{i=n+1}^{n+N}\{m_{jk}(\bz^i;\overline\bfeta,\overline\btheta_j)+m_{kj}(\bz^i;\overline\bfeta,\overline\btheta_k)\}+o_p(1)  
\end{aligned}
\end{equation*}
and
\begin{equation*}
    \begin{aligned}
        A_2=&\frac{1}{2}n^{-\frac{1}{2}}\sum_{i=1}^n\{(\widehat\btheta_{j,{\rm SL}}-\overline\btheta_j)\trans\frac{\partial m_{jk}(\bz^i;\overline\bfeta,\overline\btheta_j)}{\partial \btheta_j}+(\widehat\btheta_{k,SL}-\overline\btheta_k)\trans\frac{\partial m_{kj}(\bz^i;\overline\bfeta,\overline\btheta_k)}{\partial \btheta_k})\}\\
        &-\frac{1}{2}n^{\frac{1}{2}}N^{-1}\sum_{i=n+1}^{n+N}\{(\widehat\btheta_{j,{\rm SL}}-\overline\btheta_j)\trans\frac{\partial m_{jk}(\bz^i;\overline\bfeta,\overline\btheta_j)}{\partial \btheta_j}+(\widehat\btheta_{k,SL}-\overline\btheta_k)\trans\frac{\partial {m}_{kj}(\bz^i;\overline\bfeta,\overline\btheta_k)}{\partial \btheta_k}\},
    \end{aligned}
\end{equation*}
which can be written as 
\begin{equation*}
    \begin{aligned}
        A_2=&\frac{1}{2}n^{\frac{1}{2}}(\widehat\btheta_{j,{\rm SL}}-\overline\btheta_j)\trans\{n^{-1}\sum_{i=1}^n\frac{\partial m_{jk}(\bz^i;\overline\bfeta,\overline\btheta_j)}{\partial \btheta_j}-N^{-1}\sum_{i=n+1}^{n+N}\frac{\partial m_{jk}(\bz^i;\overline\bfeta,\overline\btheta_j)}{\partial\btheta_j}\}\\
        &+\frac{1}{2}n^{\frac{1}{2}}(\widehat\btheta_{k,SL}-\overline\btheta_k)\trans\{n^{-1}\sum_{i=1}^n\frac{\partial m_{kj}(\bz^i;\overline\bfeta,\overline\btheta_k)}{\partial \btheta_k}-N^{-1}\sum_{i=n+1}^{n+N}\frac{\partial m_{kj}(\bz^i;\overline\bfeta,\overline\btheta_k)}{\partial \btheta_k}\}.
    \end{aligned}
\end{equation*}

Since $n^{\frac{1}{2}}(\widehat\btheta_{j,{\rm SL}}-\overline\btheta_j)=O_p(1)$ and have
\[
n^{-1}\sum_{i=1}^n\frac{\partial m_{jk}(\bz^i;\overline\bfeta,\overline\btheta_j)}{\partial \btheta_j}={\rm E}\{\frac{\partial m_{jk}(\bz^i;\overline\bfeta,\overline\btheta_j)}{\partial \btheta_j}\}+o_p(1)
\]
by the weak law of large numbers, it follows that $A_2=O_p(1)\cdot o_p(1)=o_p(1).$
Then A equals
\begin{equation*}
\begin{aligned}
    A=A_1-A_2=&\frac{1}{2}n^{-\frac{1}{2}}\sum_{i=1}^n\{s_{jk}(\by_{\bw}^i;\overline\btheta_j)+s_{kj}(\by_{\bw}^i;\overline\btheta_k)-m_{jk}(\bz^i;\overline\bfeta,\overline\btheta_j)-m_{kj}(\bz^i;\overline\bfeta,\overline\btheta_k)\}\\
    &+\frac{1}{2}n^{\frac{1}{2}}N^{-1}\sum_{i=n+1}^{n+N}\{m_{jk}(\bz^i;\overline\bfeta,\overline\btheta_j)+m_{kj}(\bz^i;\overline\bfeta,\overline\btheta_k)\}+o_p(1) . 
\end{aligned}
\end{equation*}

We next analysis B, which can be written as 
\begin{equation*}
\begin{aligned}
B=&n^{\frac{1}{2}}(\widehat\bfeta-\overline\bfeta)\trans[\frac{1}{2}n^{-1}\sum_{i=1}^n\{\frac{\partial \widehat{m}_{jk}(\bz^i;\bfeta^{*1},\widehat\btheta_{j,{\rm SL}})}{\partial \bfeta}+\frac{\partial \widehat{m}_{kj}(\bz^i;\bfeta^{*2},\widehat\btheta_{k,SL})}{\partial \bfeta}\}\\
&-\frac{1}{2}N^{-1}\sum_{i=n+1}^{n+N}\{\frac{\partial \widehat{m}_{jk}(\bz^i;\bfeta^{*3},\widehat\btheta_{j,{\rm SL}})}{\partial \bfeta}+\frac{\partial \widehat{m}_{kj}(\bz^i;\bfeta^{*4},\widehat\btheta_{k,SL})}{\partial \bfeta}\}],
\end{aligned}
\end{equation*}
where $\bfeta^{*i},i=1,2,3,4$ are vectors between $\widehat\bfeta$ and $\overline\bfeta$.
Since $\widehat\bfeta\xrightarrow{p}\overline\bfeta$,
we have $\bfeta^*\xrightarrow[]{p}\overline\bfeta$. Also we have $\widehat\Sigma_{\widehat\btheta_{j,{\rm SL}}}\xrightarrow{p}\Sigma_{\overline\btheta_j}
$ since $\widehat\btheta_{j,{\rm SL}}\xrightarrow{p}\overline\btheta_j$ by the weak law of large numbers. It follows that  
\begin{equation*}
\frac{\partial \widehat{m}_{jk}(\bz;\bfeta^*,\widehat\btheta_{j,{\rm SL}})}{\partial \bfeta}=\frac{\partial m_{jk}(\bz;\overline\bfeta,\overline\btheta_j)}{\partial \bfeta}+o_p(1),
\end{equation*}
therefore, 
\begin{equation*}
    \begin{aligned}
    B=&n^{\frac{1}{2}}(\widehat\bfeta-\overline\bfeta)\trans\{\frac{1}{2}n^{-1}\sum_{i=1}^n\frac{\partial m_{jk}(\bz^i;\overline\bfeta,\overline\btheta_j)}{\partial \bfeta}-\frac{1}{2}N^{-1}\sum_{i=n+1}^{n+N}\frac{\partial m_{jk}(\bz^i;\overline\bfeta,\overline\btheta_j)}{\partial \bfeta}\\
    &+\frac{1}{2}n^{-1}\sum_{i=1}^n\frac{\partial m_{kj}(\bz^i;\overline\bfeta,\overline\btheta_k)}{\partial \bfeta}-\frac{1}{2}N^{-1}\sum_{i=n+1}^{n+N}\frac{\partial m_{kj}(\bz^i;\overline\bfeta,\overline\btheta_k)}{\partial \bfeta}+o_p(1)\}
    \end{aligned}
\end{equation*}
Same as in the process of part A, we have 
\begin{equation*}
    n^{-1}\sum_{i=1}^n\frac{\partial m_{jk}(\bz^i;\overline\bfeta,\overline\btheta_j)}{\partial \bfeta}={\rm E}\{\frac{\partial m_{jk}(\bz;\overline\bfeta,\overline\btheta_j)}{\partial \bfeta}\}+o_p(1),
\end{equation*}
it follows that 
\begin{equation*}
    \begin{aligned}
         B=&\frac{1}{2}n^{\frac{1}{2}}(\widehat\bfeta-\overline\bfeta)\trans[{\rm E}\{\frac{\partial m_{jk}(\bz;\overline\bfeta,\overline\btheta_j)}{\partial \btheta_j}\}-{\rm E}\{\frac{\partial m_{jk}(\bz;\overline\bfeta,\overline\btheta_j)}{\partial \btheta_j}\}\\
         &+{\rm E}\{\frac{\partial m_{kj}(\bz;\overline\bfeta,\overline\btheta_k)}{\partial \btheta_k}\}-{\rm E}\{\frac{\partial m_{kj}(\bz;\overline\bfeta,\overline\btheta_k)}{\partial \btheta_k}\}+o_p(1)]\\
         =&n^{\frac{1}{2}}(\widehat\bfeta-\overline\bfeta)\trans \cdot o_p(1)
    \end{aligned}
\end{equation*}
Then we have $B=O_p(1)\cdot o_p(1)=o_p(1).$

Above all, we have 
\begin{equation*}
    \begin{aligned}
        n^{\frac{1}{2}}(\widehat\theta_{jk,{\rm SCISS}}-\overline\theta_{jk})=&A-B+o_p(1)\\
        =&\frac{1}{2}n^{-\frac{1}{2}}\sum_{i=1}^n\{s_{jk}(\by_{\bw}^i;\overline\btheta_j)+s_{kj}(\by_{\bw}^i;\overline\btheta_k)-m_{jk}(\bz^i;\overline\bfeta,\overline\btheta_j)-m_{kj}(\bz^i;\overline\bfeta,\overline\btheta_k)\}\\
        &+\frac{1}{2}n^{\frac{1}{2}}N^{-1}\sum_{i=n+1}^{n+N}\{m_{jk}(\bz^i;\overline\bfeta,\overline\btheta_j)+m_{kj}(\bz^i;\overline\bfeta,\overline\btheta_k)\}+o_p(1)
    \end{aligned}
\end{equation*}

By the classic central limit theorem, we have
\begin{equation*}
    \begin{aligned}
         n^{\frac{1}{2}}(\widehat\theta_{jk,{\rm SCISS}}-\overline\theta_{jk})\xrightarrow{}N\left(0,\Omega_{jk, {\rm SCISS}}+ \frac{\zeta}{4} {\rm E} \{m_{jk}(\bz;\overline\bfeta,\overline\btheta_j)+m_{kj}(\bz;\overline\bfeta,\overline\btheta_k)\}^2\right)
    \end{aligned}
\end{equation*}
where $\Omega_{jk,{\rm SCISS}}=\frac{1}{4}{\rm E}\{s_{jk}(\by_{\bw};\overline\btheta_j)+s_{kj}(\by_{\bw};\overline\btheta_k)-m_{jk}(\bz;\overline\bfeta,\overline\btheta_j)-m_{kj}(\bz;\overline\bfeta,\overline\btheta_k)\}^2$ and $\zeta=\frac{n}{N}.$
By the underlying assumption that $\frac{n}{N}=o(1)$, the second term of the variance of $A$ equals $o_p(1)$. It follows that 
\begin{equation*}
    n^{\frac{1}{2}}(\widehat\theta_{jk,{\rm SCISS}}-\overline\theta_{jk})\xrightarrow{}N(0,\Omega_{jk,{\rm SCISS}})
\end{equation*}

Thus $n^{\frac{1}{2}}(\widehat{\theta}_{jk,{\rm SCISS}}-\overline\theta_{jk})$ converges to $N(0,\Omega_{jk,{\rm SCISS}})$ in distribution.

Notice that we have $\Omega_{jk,{\rm SL}}=\frac{1}{4}{\rm E}\{(s_{jk}(\by_{\bw};\overline\btheta_j)+s_{kj}(\by_{\bw};\overline\btheta_k)\}^2$ in Lemma \ref{lemma:SL} and $\Omega_{jk,{\rm SCISS}}=\frac{1}{4}{\rm E}\{m_{kj}(\bz;\overline\bfeta,\overline\btheta_k)+m_{jk}(\bz;\overline\bfeta,\overline\btheta_j)-s_{jk}(\by_{\bw};\overline\btheta_j)-s_{kj}(\by_{\bw};\overline\btheta_k)\}^2$ proved above. When the assumption model is correctly specified, $m_{jk}(\bz;\overline\bfeta,\overline\btheta_j)$ is the true conditional mean of $s_{jk}(\by;\overline\btheta_j)$ on $\bz$, combined with the well-known theorem that the variance of ${\rm E}(\mathbf{X}_2|\mathbf{X}_1)$ is no larger than the variance of $\mathbf{X}_2$, we have that $\Omega_{jk,{\rm SL}}-\Omega_{jk,{\rm SCISS}}\geq 0.$  

\section{Supplementary details for simulation}
\subsection{Procedure of Obtaining Intrinsic Efficient Estimator}\label{sec:procedure_intri}
In Section \ref{sec:intrinsic}, we introduce the intrinsic efficient SS estimation to ensure the variance of our improved SCISS estimator, $\widehat\theta_{jk,{\rm INTR}}$, is less than $\widehat\btheta_{j,{\rm SL}}$. 
Unfortunately, $\widehat\sigma_j^2(\bfeta,\e)$ is a non-convex function of $\bfeta$. So to ensure the stability, our strategy is to use a gradient method to update $\bfeta$ until $\widehat\sigma_j^2(\bfeta,\e)$ stops to descend. In specific,we let $\widehat\bfeta_{\rm INTR}^{(1)}=\widehat\bfeta_{\rm INTR}$ perform
\[
    \widehat\bfeta_{\rm INTR}^{(t)}=\widehat\bfeta_{\rm INTR}^{(t-1)}+\left[\frac{\partial ^2 \widehat\sigma_j^2(\bfeta,\e)}{\partial \bfeta\partial\bfeta\trans}\right]^{-1}\frac{\partial \widehat\sigma_j^2(\bfeta,\e)}{\partial \bfeta}\bigg|_{\bfeta=\widehat\bfeta_{\rm INTR}^{(t-1)}},
\]
where $\widehat\sigma_j^2(\bfeta,\e)=n^{-1}\sum_{i=1}^n [\e\trans\{\widehat\S_j(\by_{\bw}^i;\widehat\btheta_{j,{\rm SL}})-\widehat\m_j(\bz^i;\bfeta,\widehat\btheta_{j,{\rm SL}})\}]^2,$
for $t=1,2,\ldots,T,$ until $\widehat\sigma_j^2(\widehat\bfeta_{\rm INTR}^{(t-1)},\e)<\widehat\sigma_j^2(\widehat\bfeta_{\rm INTR}^{(t)},\e)$ or $t$ reaches the bound of iteration $T$. 

In practice, to facilitate the numerical calculation, we apply a method as to expand $\widehat\m_j(\bz;\bfeta,\widehat\btheta_{j,{\rm SL}})$ as 
\[
\widehat\m_j^*(\bz;\bfeta,\widehat\btheta_{j,{\rm SL}})=\widehat\m_j(\bz;\widehat\bfeta^{(t-1)},\widehat\btheta_{j,{\rm SL}})+(\bfeta-\widehat\bfeta^{(t-1)})\trans\frac{\partial \widehat\m_j(\bz;\widehat\bfeta^{(t-1)},\widehat\btheta_{j,{\rm SL}})}{\partial \bfeta}
\]
in our simulation study and the simulation result of intrinsic Efficient SS estimation is provided in Appendix \ref{sec:intrinsic simulation}.

\subsection{Simulation of Intrinsic Efficient SS estimation}\label{sec:intrinsic simulation}
We conducted the additional simulation studies to evaluate the performance of the Intrinsic Efficient SS estimation proposed in Section \ref{sec:misspecification}. We let $q=3$ and first generated $\by$ by Ising parameters
\begin{center}
$\btheta=\begin{bmatrix}-2&0&1\\0&-1.5&1\\1&1&-1\end{bmatrix}$.
\end{center}
Then we let $p=q=3$ and fix $p=q=3$, $n=500$, $N=7500$. 
To generate $\bx$, we let $P(x_k=1| y_k=1)=0.6$, $P(x_k=0| y_k=1)=0.4$ and set $x_k=0$ when $y_k=0$. The inspiration of this setting comes from the anchor-positive surrogate variable \citep{zhang2020maximum}.
Among all the $3\times3$ estimates, we target at $\theta_{12}$ and exert Intrinsic Efficient SS estimation procedure on its dimension. The first step is to obtain the SCISS-Aug estimator, then we use the iteration method described in Section \ref{sec:procedure_intri}, using Newton-Raphson on $\widehat\sigma_j^2(\bfeta,\e)$ to update the $\bgamma$ from its original estimate until the empirical $\widehat\sigma_j^2(\bfeta,\e)$ does not decrease anymore. We briefly name this strategy including additional steps as SCISS-Aug-intri. To ensure the algorithm is terminable, the bound of iterations is set as $4$. The result shows that the relative efficiency between SCISS-Aug-intri and SCISS-Aug is {$\mathbf{1.25}$}, confirming a moderately better performance than SCISS-Aug. 
Notice that the above setting has a relatively severe misspecification of the imputation model. In most cases, including the main setting introduced in Section \ref{sec:simulation}, the Intrinsic estimator performs close to the basic SCISS result. For example, the iteration on $\bfeta$ does not happen therefore no update version of $\widehat\theta_{jk,{\rm SCISS}}$ can be built, or the update does happen but the change is so small that the relative efficiency is close to 1. It reminds us that it is not necessary to add Intrinsic Efficient SS estimation in practice.

\subsection{Asymptotic Properties of $\widehat\btheta_{\rm INTR}$}\label{sec:proof_intri}

When we are discussing the properties of intrinsic SS estimator in this section, we assume that $\widehat\bfeta_{\rm INTR}$ is the global vector such that $\widehat\sigma^2_j(\bfeta,\e),j=1,\ldots,q$ reaches its minimization. 

Since $\widehat\bfeta_{\rm INTR}$ satisfied equation (\ref{equ:etaintri_E}) and we have the underlying assumption of global property above, we have $\widehat\bfeta_{\rm INTR}$ is the solution to
\[
\widehat{C}=\frac{2}{n}\sum_{i=1}^n\frac{\partial\{\e\trans\widehat\m_j(\bz^i;\bfeta,\widehat\btheta_{j,{\rm SL}})\}}{\partial\bfeta}[\e\trans\{\widehat\S_j(\by^i;\widehat\btheta_{j,{\rm SL}})-\widehat\m_j(\bz^i;\bfeta,\widehat\btheta_{j,{\rm SL}})\}]=0,
\]
for each $j$.
Under condition \ref{cond:intri}, we have the Taylor expansion of $n^{\frac{1}{2}}(\widehat\bfeta_{\rm INTR}-\overline\bfeta_{\rm INTR})$ as
\[
    n^{\frac{1}{2}}(\widehat\bfeta_{\rm INTR}-\overline\bfeta_{\rm INTR})=D^{-1}\sum_{i=1}^n C+o_p(1),
\]
where 
\[D={\rm E}\bigg(\frac{\partial(\e\trans\m_j(\bz;\overline\bfeta,\overline\btheta_j))}{\partial \bfeta}\cdot\frac{\partial(\e\trans\m_j(\bz;\overline\bfeta,\overline\btheta_j))}{\partial\bfeta\trans}-[\e\trans\{\S_j(\by_{\bw};\overline\btheta_j)-\m_j(\bz;\overline\bfeta,\overline\btheta_j)\}]\cdot\frac{\partial^2\{\e\trans\m_j(\bz;\overline\bfeta,\overline\btheta_j)\}}{\partial\bfeta\partial\bfeta\trans}\bigg)
\]
and 
\[C={\rm E}\bigg(\frac{\partial\{\e\trans\m_j(\bz^i;\bfeta,\overline\btheta_j)\}}{\partial\bfeta}[\e\trans\{\S_j(\by_{\bw}^i;\overline\btheta_j)-\m_j(\bz;\bfeta,\overline\btheta_j)\}\bigg).\] 

Therefore we have $n^{\frac{1}{2}}(\widehat\bfeta_{\rm INTR}-\overline\bfeta_{\rm INTR})$ converges to a Gaussian distribution, then $n^{\frac{1}{2}}(\widehat\bfeta_{\rm INTR}-\overline\bfeta_{\rm INTR})=O_p(1)$. Similar to the proof in Appendix \ref{sec:proof_SSL}, we can derive the expansion for $n^{\frac{1}{2}}(\widehat\theta_{jk,{\rm INTR}}-\overline\theta_{jk})$ as 
\begin{equation*}
\begin{aligned}
    n^{\frac{1}{2}}(\widehat\theta_{jk,{\rm INTR}}-\overline\theta_{jk})=&\frac{1}{2}n^{-\frac{1}{2}}\sum_{i=1}^{n}\{s_{jk}(\by_{\bw}^i;\overline\btheta_j)+ s_{kj}(\by_{\bw}^i;\overline\btheta_k)- m_{jk}(\bz^i;\overline\bfeta_{\rm INTR},\overline\btheta_j)- m_{kj}(\bz^i;\overline\bfeta_{\rm INTR},\overline\btheta_k)\}\\
    &+\frac{1}{2}n^{\frac{1}{2}}N^{-1}\sum_{i=n+1}^{n+N}\{ m_{jk}(\bz^i;\overline\bfeta_{\rm INTR},\overline\btheta_j)+ m_{kj}(\bz^i;\overline\bfeta_{\rm INTR},\overline\btheta_k)\}+o_p(1),
\end{aligned}
\end{equation*}
also
\begin{equation*}
\begin{aligned}
    n^{\frac{1}{2}}(\widecheck\btheta_{j,{\rm INTR}}-\overline\btheta_j)=&n^{-\frac{1}{2}}\sum_{i=1}^{n}\{\S_j(\by_{\bw}^i;\overline\btheta_j)-\m_j(\bz^i;\overline\bfeta_{\rm INTR},\overline\btheta_j)\}+n^{\frac{1}{2}}N^{-1}\sum_{i=n+1}^{n+N}\m_j(\bz^i;\overline\bfeta_{\rm INTR},\overline\btheta_j)+o_p(1).
\end{aligned}
\end{equation*}

From Theorem $\ref{thm:SSL}$, we can learn that when the imputation model is correctly specified, that is, $\overline\bfeta_{\rm INTR}=\overline\bfeta$, it follows that $n^{\frac{1}{2}}(\widehat\theta_{jk,{\rm INTR}}-\overline\theta_{jk})$ is asymptotic equivalent to $n^{\frac{1}{2}}(\widehat\theta_{jk,{\rm SCISS}}-\overline\theta_{jk}).$ When the imputation model is mis-specified, we are able to obtain an estimator for $\theta_{jk,{\rm INTR}}$ by traversing $\e\in\{0,1\}^q$ for each $j$ thus obtain $\widehat\theta_{jk,{\rm INTR}}$ that owns smallest variance since $\overline\bfeta_{\rm INTR}$
is the minimum point of the function of asymptotic function. This completes the proof of Theorem \ref{thm:intri}.

\newpage
\subsection{Complete Result of Simulation}\label{sec:complete simulation result}
\begin{table}[!ht]
\caption{The complete result of simulation that includes the numerical analysis of diagonal entry. The formula of measuring the value is consistent with the simulation section\ref{sec:simulation}.}
\begin{center}
( gaussian none )\vspace{1em}\\
\begin{tabular}{|c|c|c|c|c|c|c|c|c|c|c|c|c|c|}\hline
{}&\multicolumn{2}{c|}{SL}& \multicolumn{3}{c|}{DR}&\multicolumn{4}{c|}{SCISS-Aug}&\multicolumn{4}{c|}{SCISS-PoS}\\\hline 
{}&Bias&SE&Bias&SE&RE&Bias&SE&RE&CP&Bias&SE&RE&CP\\\hline
{$\theta_{11}$}& 0.005 & 0.246 & 0.005 & 0.216 & 1.31 & 0.007 & 0.220 & 1.25 & 0.96 & -0.007 & \pmb{0.165} & \pmb{2.23} & 0.94\\
{$\theta_{12}$}& 0.002 & 0.294 & 0.002 & 0.295 & 1.00 & -0.005 & 0.242 & 1.49 & 0.96 & -0.014 & \pmb{0.208} & \pmb{2.01} & 0.95\\
{$\theta_{13}$}& -0.013 & 0.290 & -0.014 & 0.291 & 0.99 & -0.005 & 0.276 & 1.10 & 0.95 & 0.031 & \pmb{0.219} & \pmb{1.74} & 0.94\\
{$\theta_{22}$}& -0.011 & 0.280 & -0.004 & 0.250 & 1.26 & 0.002 & 0.244 & 1.31 & 0.96 & 0.002 & \pmb{0.183} & \pmb{2.33} & 0.95\\
{$\theta_{23}$}& 0.003 & 0.289 & 0.002 & 0.291 & 0.99 & 0.007 & 0.288 & 1.01 & 0.96 & 0.006 & \pmb{0.212} & \pmb{1.86} & 0.95\\
{$\theta_{33}$}& 0.010 & 0.240 & 0.006 & 0.215 & 1.24 & 0.009 & 0.203 & 1.41 & 0.95 & -0.022 & \pmb{0.179} & \pmb{1.81} & 0.92\\\hline
\end{tabular}
\end{center}
\end{table}

\begin{table}[!ht]
\begin{center}
( gaussian weak )\vspace{1em}\\
\begin{tabular}{|c|c|c|c|c|c|c|c|c|c|c|c|c|c|}\hline
{}&\multicolumn{2}{c|}{SL}& \multicolumn{3}{c|}{DR}&\multicolumn{4}{c|}{SCISS-Aug}&\multicolumn{4}{c|}{SCISS-PoS}\\\hline 
{}&Bias&SE&Bias&SE&RE&Bias&SE&RE&CP&Bias&SE&RE&CP\\\hline
$\theta_{11}$& 0.005 & 0.246 & 0.006 & 0.224 & 1.21 & 0.007 & 0.204 & 1.46 & 0.95 & 0.001 & \pmb{0.198} & \pmb{1.56} & 0.94\\
$\theta_{12}$& 0.002 & 0.294 & 0.003 & 0.295 & 1.00 & -0.003 & 0.236 & 1.55 & 0.95 & -0.013 & \pmb{0.237} & \pmb{1.55} & 0.95\\
$\theta_{13}$& -0.013 & 0.290 & -0.015 & 0.289 & 1.00 & -0.004 & 0.243 & 1.42 & 0.95 & 0.033 & \pmb{0.233} & \pmb{1.55} & 0.95\\
$\theta_{22}$& -0.011 & 0.280 & -0.004 & 0.257 & 1.18 & 0.001 & 0.231 & 1.46 & 0.96 & 0.006 & \pmb{0.216} & \pmb{1.69} & 0.94\\
$\theta_{23}$& 0.003 & 0.289 & 0.002 & 0.291 & 0.99 & 0.009 & 0.262 & 1.22 & 0.96 & 0.004 & \pmb{0.223} & \pmb{1.68} & 0.95\\
$\theta_{33}$& 0.010 & 0.240 & 0.007 & 0.224&1.15& 0.006 & \pmb{0.198} & \pmb{1.48} & 0.95 & -0.037 & 0.207 &1.35 & 0.93 \\\hline
\end{tabular}
\end{center}
\end{table}

\begin{table}[!ht]
\begin{center}
( gaussian moderate )\vspace{1em}\\
\begin{tabular}{|c|c|c|c|c|c|c|c|c|c|c|c|c|c|}\hline
{}&\multicolumn{2}{c|}{SL}& \multicolumn{3}{c|}{DR}&\multicolumn{4}{c|}{SCISS-Aug}&\multicolumn{4}{c|}{SCISS-PoS}\\\hline 
{}&Bias&SE&Bias&SE&RE&Bias&SE&RE&CP&Bias&SE&RE&CP\\\hline
$\theta_{11}$& 0.005 & 0.246 & 0.006 & 0.239 & 1.07 & 0.002 & \pmb{0.203} & \pmb{1.47} & 0.95 & 0.001 & 0.287 & 0.74 & 0.93\\
$\theta_{12}$& 0.002 & 0.294 & 0.006 & 0.292 & 1.02 & 0.003 & \pmb{0.225} & \pmb{1.71} & 0.94 & -0.021 & 0.270 & 1.19 & 0.94\\
$\theta_{13}$& -0.013 & 0.290 & -0.015 & 0.280 & 1.07 & 0.001 & \pmb{0.223} & \pmb{1.68} & 0.95 & 0.032 & 0.271 & 1.15 & 0.94 \\
$\theta_{22}$& -0.011 & 0.280 & -0.005 & 0.266 & 1.11 & -0.000 & \pmb{0.220} & \pmb{1.62} & 0.96 & -0.041 & 0.262 & 1.14 & 0.92\\
$\theta_{23}$& 0.003 & 0.289 & 0.003 & 0.288 & 1.01 & 0.010 & \pmb{0.235} & \pmb{1.51} & 0.96 & -0.001 & 0.269 & 1.16 & 0.94\\
$\theta_{33}$& 0.010 & 0.240 & 0.009 & 0.234 & 1.05 & 0.001 & \pmb{0.202} & \pmb{1.42} & 0.96 & 0.011 & 0.292 & 0.68 & 0.92\\\hline
\end{tabular}
\end{center}
\end{table}

\begin{table}[!ht]
\begin{center}
( poisson none )\vspace{1em}\\
\begin{tabular}{|c|c|c|c|c|c|c|c|c|c|c|c|c|c|}\hline
{}&\multicolumn{2}{c|}{SL}& \multicolumn{3}{c|}{DR}&\multicolumn{4}{c|}{SCISS-Aug}&\multicolumn{4}{c|}{SCISS-PoS}\\\hline 
{}&Bias&SE&Bias&SE&RE&Bias&SE&RE&CP&Bias&SE&RE&CP\\\hline
$\theta_{11}$& 0.005 & 0.246 & 0.008 & 0.216 & 1.30 & -0.001 & 0.220 & 1.25 & 0.96 & 0.002 & \pmb{0.103} & \pmb{5.75} & 0.92\\
$\theta_{12}$& 0.002 & 0.294 & 0.002 & 0.293 & 1.01 & 0.004 & 0.228 & 1.67 & 0.97 & 0.003 & \pmb{0.129} & \pmb{5.18} & 0.94\\
$\theta_{13}$& -0.013 & 0.290 & -0.014 & 0.291 & 0.99 & -0.007 & 0.279 & 1.08 & 0.96 & 0.005 & \pmb{0.134} & \pmb{4.66} & 0.93\\
$\theta_{22}$& -0.011 & 0.280 & -0.011 & 0.243 & 1.33 & -0.007 & 0.235 & 1.42 & 0.96 & 0.002 & \pmb{0.113} & \pmb{6.13} & 0.95\\
$\theta_{23}$& 0.003 & 0.289 & 0.004 & 0.289 & 1.00 & 0.004 & 0.287 & 1.01 & 0.96 & -0.005 & \pmb{0.127}& \pmb{5.21} & 0.95\\
$\theta_{33}$& 0.010 & 0.240 & 0.004 & 0.211 & 1.29 & 0.012 & 0.191 & 1.58 & 0.95 & -0.003 & \pmb{0.102} & \pmb{5.57} & 0.92\\\hline
\end{tabular}
\end{center}
\end{table}

\begin{table}[!ht]
\begin{center}
( poisson weak )\vspace{1em}\\
\begin{tabular}{|c|c|c|c|c|c|c|c|c|c|c|c|c|c|}\hline
{}&\multicolumn{2}{c|}{SL}& \multicolumn{3}{c|}{DR}&\multicolumn{4}{c|}{SCISS-Aug}&\multicolumn{4}{c|}{SCISS-PoS}\\\hline 
{}&Bias&SE&Bias&SE&RE&Bias&SE&RE&CP&Bias&SE&RE&CP\\\hline
$\theta_{11}$& 0.005 & 0.246 & 0.006 & 0.218 & 1.28 & -0.003 & 0.207 & 1.41 & 0.96 & 0.002 & \pmb{0.118} &\pmb{4.34} & 0.92\\
$\theta_{12}$& 0.002 & 0.294 & 0.005 & 0.296 & 0.99 & 0.003 & 0.222 & 1.76 & 0.96 & 0.003 & \pmb{0.141} & \pmb{4.38} & 0.94\\
$\theta_{13}$& -0.013 & 0.290 & -0.015 & 0.290 & 1.00 & -0.005 & 0.254 & 1.30 & 0.95 & 0.005 & \pmb{0.132} & \pmb{4.79} & 0.96\\
$\theta_{22}$& -0.011 & 0.280 & -0.006 & 0.247 & 1.29 & -0.009 & 0.229 & 1.50 & 0.95 & 0.004 & \pmb{0.131} & \pmb{4.59} & 0.94\\
$\theta_{23}$& 0.003 & 0.289 & 0.004 & 0.290 & 1.00 & 0.011 & 0.280 & 1.07 & 0.97 & -0.010 & \pmb{0.144} & \pmb{4.04} & 0.94\\
$\theta_{33}$& 0.010 & 0.240 & 0.009 & 0.214 & 1.26 & 0.011 & 0.189 & 1.61 & 0.95 & 0.004 & \pmb{0.115} & \pmb{4.39} & 0.94\\\hline
\end{tabular}
\end{center}
\end{table}

\begin{table}[!ht]
\begin{center}
( poisson moderate )\vspace{1em}
\\
\begin{tabular}{|c|c|c|c|c|c|c|c|c|c|c|c|c|c|}\hline
{}&\multicolumn{2}{c|}{SL}& \multicolumn{3}{c|}{DR}&\multicolumn{4}{c|}{SCISS-Aug}&\multicolumn{4}{c|}{SCISS-PoS}\\\hline 
{}&Bias&SE&Bias&SE&RE&Bias&SE&RE&CP&Bias&SE&RE&CP\\\hline
$\theta_{11}$& 0.005 & 0.246 & 0.005 & 0.231 & 1.14 & -0.002 & 0.197 & 1.57 & 0.96 & -0.010 & \pmb{0.178} & \pmb{1.92} & 0.93\\
$\theta_{12}$& 0.002 & 0.294 & 0.007 & 0.296 & 0.99 & 0.005 & 0.209 & 1.99 & 0.96 & 0.006 & \pmb{0.173} &\pmb{2.89} & 0.94\\
$\theta_{13}$& -0.013 & 0.290 & -0.015 & 0.284 & 1.04 & -0.004 & 0.226 & 1.65 & 0.95 & 0.017 & \pmb{0.176} & \pmb{2.70} & 0.96\\
$\theta_{22}$& -0.011 & 0.280 & -0.005 & 0.255 & 1.20 & -0.007 & 0.214 & 1.71 & 0.96 & -0.009 & \pmb{0.186} & \pmb{2.27} & 0.93\\
$\theta_{23}$& 0.003 & 0.289 & 0.006 & 0.289 & 1.00 & 0.010 & 0.254 & 1.30 & 0.96 & -0.012 & \pmb{0.187} & \pmb{2.39} & 0.92\\
$\theta_{33}$& 0.010 & 0.240 & 0.010 & 0.224 & 1.15 & 0.009 & 0.189 & 1.62 & 0.96 & 0.003 & \pmb{0.174} & \pmb{1.91} & 0.95 \\\hline
\end{tabular}
\end{center}
\end{table}

\newpage
\subsection{Procedure of Density-ratio}\label{sec:dr}
In this section we explain the detail of implementing the density-ratio method in simulation. We use the same procedure in real data analysis Section \ref{sec:realdata}.
We apply the alternative semi-supervised learning method density-ratio estimation proposed by \cite{2012Semi} on simulation studies and example of MIMIC-III and compare the result with the SCISS method we proposed. Density-ratio method introduces the weighted MLE with the weight function that can be estimated by the labeled and unlabeled data. When we implement this method, we construct a new response $\bX_{dr}$ with length $n+N$ and let the first n value (refers to the sample size of labeled data) to be 0 and the next $N$ value to be 1 (refers to the sample size of unlabeled data). At the same time, we construct a new $(n+N)\times p$ variable $\bY_{dr}$, then replicate $\bx$ of labeled data into the first n array and $\bx$ of unlabeled data into the next N array of $\bY_{dr}$. We next obtain a regression coefficient $\widehat\bfeta_{dr}$ from the logistic regression of $\bY_{dr}$ on $\bX_{dr}$. Then we calculate the weight $\bw_{dr}(\bx)$, by the formula $w_{dr}(\bx)=\exp(\widehat\bfeta_{dr}\cdot\bx)$. Each $\bx$ of labeled data build its weight with this weight function. Next we normalize the weight for n labeled data as $\bw_{dr}=(w_{1,dr},\ldots,w_{n,dr})\trans$, where $w_{i,dr}=\exp(\widehat\bfeta_{dr}\cdot\bx^i)$. Then we solve the weighted MLE, $\widehat\btheta_{j,dr}=\{(\widehat\theta_{j1,dr})\trans,\ldots,(\widehat\btheta_{jj,dr})\trans,\ldots,(\widehat\theta_{jq,dr})\trans\}\trans$, as a solution of the following equation:
\[
\frac{1}{n}\sum_{i=1}^n \bw_{dr}[i]\cdot[\by^{i}_{\setminus j,\bw}\{y_j^i-g(\btheta_j\trans\by^{i}_{\setminus j,\bw})\}]=\bzero,
\]
and compose the first density ratio estimator as $\widecheck\btheta_{dr}=\{(\widehat\btheta_{1,dr})\trans,\ldots,(\widehat\btheta_{q,dr})\trans\}\trans$. In the same way, we apply the symmetrization on $\widecheck\btheta_{dr}$ as $\widehat\theta_{jk,dr}=\frac{1}{2}(\widecheck\theta_{jk,dr}+\widecheck\theta_{kj,dr})$ and obtain the final density ratio estimator as $\widehat\btheta_{dr}=(\widehat\btheta_{1,dr},\ldots,\widehat\btheta_{q,dr})\trans,$ where $\widehat\btheta_{j,dr}=(\widehat\theta_{j1,dr},
\ldots,\widehat\btheta_{jj,dr}\trans,\ldots,\widehat\theta_{jq,dr})\trans$ and $\widehat\btheta_{jj,dr}=\widecheck\btheta_{jj,dr}$ for $j=1,\ldots,q$.
\end{document}